\documentclass[preprint,3p,times]{elsarticle}
\usepackage{natbib}
\usepackage{amsmath} 
\usepackage{amsfonts} 
\usepackage{bm} 
\usepackage{calc} 
\usepackage{float}
\usepackage{subcaption} 
\usepackage{hyperref} 

\usepackage{svg}
\usepackage{siunitx}
\usepackage{amssymb,moreverb,pgfplots,tikz}
\pgfplotsset{compat=1.12}
\usepackage{rotating}
\usepackage{enumitem}
\usepackage{cleveref}

\definecolor{mycolor1}{rgb}{0.00000,0.44700,0.74100}%
\definecolor{mycolor2}{rgb}{0.92900,0.69400,0.12500}%
\definecolor{MatlabBlue}{rgb}    {0     , 0.4470, 0.7410}
\definecolor{MatlabRed}{rgb}     {0.8500, 0.3250, 0.0980}
\definecolor{MatlabYellow}{rgb}  {0.9290, 0.6940, 0.1250}
\definecolor{MatlabPurple}{rgb}  {0.4940, 0.1840, 0.5560}
\definecolor{MatlabGreen}{rgb}   {0.4660, 0.6740, 0.1880}
\definecolor{MatlabBabyBlue}{rgb}{0.3010, 0.7450, 0.9330}
\definecolor{MatlabGray}{rgb}{0.5, 0.5, 0.5}
\definecolor{MatlabLightGray}{rgb}{0.75, 0.75, 0.75}
\definecolor{MatlabBlack}{rgb}{0, 0, 0}

\definecolor{MatlabLightGray4}{rgb}{0.875, 0.875, 0.875}
\definecolor{MatlabLightGray3}{rgb}{0.85, 0.85, 0.85}
\definecolor{MatlabLightGray2}{rgb}{0.775, 0.775, 0.775}
\definecolor{MatlabLightGray1}{rgb}{0.7, 0.7, 0.7}
\definecolor{MatlabGray30}{rgb}{0.3, 0.3, 0.3}
\definecolor{MatlabGray40}{rgb}{0.4, 0.4, 0.4}
\definecolor{MatlabGray50}{rgb}{0.5, 0.5, 0.5}
\definecolor{MatlabGray60}{rgb}{0.6, 0.6, 0.6}
\definecolor{MatlabGray70}{rgb}{0.7, 0.7, 0.7}
\definecolor{MatlabGray80}{rgb}{0.8, 0.8, 0.8}
\definecolor{MatlabGray90}{rgb}{0.9, 0.9, 0.9}

\definecolor{Red}{rgb}{1 0 0}
\definecolor{Black}{rgb}{0 0 0}

\definecolor{myblue}{rgb}{0 0 1}

\newcommand{\tikzline}[1]{(\protect\tikz[baseline=-0.6ex,x=1pt,y=1pt]{ \protect\draw[#1,thick] [-] (0,0) -- (10,0);})}
\newcommand{\tikzdashedline}[1]{(\protect\tikz[baseline=-0.6ex,x=0.9pt,y=1pt]{ \protect\draw[#1,thick,dashed] [-] (0,0) -- (10,0);})}

\newcommand{\tikzdottedline}[1]{(\protect\tikz[baseline=-0.6ex,x=1pt,y=1pt]{ \protect\draw[#1,thick,dotted] [-] (0,0) -- (10,0);})}

\newcommand{\tikzcross}[1]{(\protect\tikz[baseline=-0.6ex,x=1pt,y=1pt]{ \protect\draw[color=#1, thick] (-2,-2) -- (2,2) (-2,2) -- (2,-2) })}

\DeclareRobustCommand\encircle[1]{%
\tikz[baseline=(X.base)] 
   \node (X) [draw, shape=circle, inner sep=0] {\strut #1};}

\newcommand{\norm}[1]{\left\lVert#1\right\rVert}
\newcommand{\abs}[1]{\left\lvert#1\right\rvert}
\newcommand{\RHinf}{$\mathcal{R}\mathcal{H}_\infty$ }

\hypersetup{pdfauthor={Name}} 

\usepackage{amsthm}
\newtheorem{definition}{Definition}

\newtheorem{problem}{Problem}
\newtheorem{lemma}{Lemma}

\newtheorem{theorem}{Theorem}

\newtheorem{remark}{Remark}

\usepackage{graphicx}

\usepackage{fancyhdr}

\fancypagestyle{firstpage}{
    \fancyhf{}
    \fancyhead[C]{%
        \small
        Koen Classens et al., \textit{Closed-loop Optimal Fault Detection for Uncertain Systems},\\
        \textit{Submitted for journal publication}, uploaded to ArXiv June 15, 2026
    }
    \fancyfoot[C]{\thepage} 
    
}

\begin{document}
    
\begin{titlepage}
    \title{Closed-loop Optimal Fault Detection for Uncertain Systems}

    \author[1]{Koen Classens\corref{cor1}}
    \ead{k.h.j.classens@gmail.com}
    \author[1]{Tjeerd Ickenroth}
    \author[2]{Jeroen van de Wijdeven}
    \author[1,3]{Tom Oomen}
       
    \cortext[cor1]{Corresponding author}
   
    \address[1]{Eindhoven University of Technology, Department of Mechanical Engineering, Control Systems Technology, Eindhoven, The Netherlands}
    \address[2]{ASML, Veldhoven, The Netherlands}
    \address[3]{Delft University of Technology, Department of Mechanical Engineering, Delft Center for Systems and Control, Delft, The Netherlands}

    \begin{abstract}
    Faults compromise the reliability and safety of complex engineering systems. 
    The aim of this article is to address the problem of robust fault detection filter design for continuous-time linear time-invariant uncertain systems in open-loop or closed-loop configurations. 
    The developed method offers a unified approach to handle parametric and dynamic uncertainties by solving a single Riccati equation, based on a worst-case disturbance and uncertainty model. This worst-case model is obtained by nonlinear optimization and application of the boundary Nevanlinna-Pick method.
    The efficacy of the proposed approach is demonstrated using an uncertain model of an experimental reticle stage used in the lithography industry. The results illustrate that an optimal compromise is achieved between sensitivity to faults and rejection of modelling uncertainties and disturbances on the other hand. 
    This capability enables the clear differentiation between faults and undesired effects in residuals, thereby enhancing fault detection reliability, ultimately contributing to improved safety and performance of machines.
    \end{abstract}
    
    \begin{keyword}
    Robust Fault Detection \sep
    Fault Diagnosis \sep
    Uncertain systems \sep
    Optimal Fault Detections.
    \end{keyword}

\end{titlepage}
    
    \maketitle
\thispagestyle{firstpage}

    \section{Introduction} \label{sec:introduction}

Fault diagnosis systems play an essential role in engineered systems, which are continuously growing in complexity. Examples include high-precision production equipment \cite{Classens2021DigitalTwin,classens2023Opportunities} and aerospace applications \cite{marzatModelbasedFaultDiagnosis2012,zolghadriFaultDiagnosisFaultTolerant2014}, which require a large focus on improving system safety and reliability. Without proper monitoring and maintenance, the question is not whether a machine will fail but when it will. Therefore, timely detection and identification of faults is crucial to mitigate the risk of performance degradation, damage, and threats to human safety. In addition, the knowledge gained from diagnostic systems can be used to optimize maintenance scheduling. 

Complex engineered systems often operate in closed-loop configurations, where feedback mechanisms are essential to achieve the desired performance. The available models of these systems are inherently uncertain due to factors such as limited estimation accuracy, simplifications and assumptions, or system variability, which can significantly impact fault diagnosis performance. Consequently, there is a need for advanced fault diagnosis techniques that account for the closed-loop dynamics and can effectively manage these model uncertainties and the inevitable presence of disturbances in real-life settings.

Driven by the growing demand for more safe and reliable systems, the development of fault diagnosis approaches has received considerable attention. Both data-driven methods \cite{gertlerFaultDetectionDiagnosis1998,isermannFaultdiagnosisSystemsIntroduction2006,dingDatadrivenDesignFault2014} and model-based methods \cite{gertlerFaultDetectionDiagnosis1998,chenRobustModelBasedFault1999,isermannFaultdiagnosisSystemsIntroduction2006,dingModelbasedFaultDiagnosis2008,Varga2017} have shown considerable progress. Among these, observer-based methods as considered in this article have gained substantial attention due to their effectiveness in detecting various types of faults \cite{frank1997survey,chenRobustModelBasedFault1999}. Based on such an observer-based framework, strategies have been developed for fault detection and isolation (FDI) \cite{chenRobustModelBasedFault1999,dingModelbasedFaultDiagnosis2008,Varga2017}.

A key challenge in fault detection (FD) involves distinguishing faults from unknown disturbances. It is widely acknowledged that achieving satisfactory performance in model-based fault diagnosis systems requires a delicate balance between sensitivity to faults and disturbance rejection \cite{Ding2000}. Several optimal fault diagnosis methods have been developed for linear time-invariant (LTI) systems, including factorization-based techniques \cite{Ding2000,jaimoukhaMatrixFactorizationSolution2006}, often implemented through the solution of a Riccati equation \cite{liuOptimalSolutionsMultiobjective2007}. Alternatively, $\mathcal{H}_-/\mathcal{H}_\infty$ techniques have been proposed, employing LMI synthesis techniques \cite{houLMIApproachHinfty1996,liuLMIApproachMinimum2005,fengtaoFaultDetectionObserver2005,wangLMIApproachIndex2007}. These methods are optimal in the sense that the residual is as sensitive to faults as possible provided that the disturbance and plant model are exactly known.

An additional challenge in model-based fault diagnosis lies in distinguishing model errors from disturbances and faults. Robust methods have been developed to explicitly address modeling uncertainty. Among these, $\mathcal{H}_\infty$-norm bounded designs are often considered, which are optimized using $\mu$-synthesis \cite{sadrniaRobustInftyMu1997,stoustrupFaultEstimationStandard2002}. Other methods involve $\mathcal{H}_\infty$ model-matching techniques, solved through Linear Matrix Inequality (LMI) optimization \cite{zhongLMIApproachDesign2003}. Additionally, $\mathcal{H}_{-}/\mathcal{H}_\infty$ criteria are employed and addressed with LMI solutions \cite{haibowangIterativeLMIApproach2003} and non-smooth optimization \cite{henryTheoriesDesignAnalysis2021}, sometimes in combination with $\mu_{g}$-analysis \cite{newlinGeneralizationStructuredSingular1998}. Alternatively, the open-loop multiobjective problem is addressed by bounding model uncertainties \cite{Li2008}, in the context of LPV systems \cite{wei2025Application,huang2026towards}, or by IQC-based LPV formulations \cite{venkataraman2016robust,ho2019robust}.

A key challenge regarding robust methods involves its complexity and conservatism. For example, methods that rely solely on an $\mathcal{H}_{\infty}$ criterion do not directly account for $\mathcal{H}_{-}$ fault sensitivity, which has to be analyzed a posteriori. Model matching techniques often fail to guarantee optimality in terms of $\mathcal{H}_{-}$ fault sensitivity, as their effectiveness heavily depends on the reference model. In addition, these methods are often difficult to extend for fault isolation purposes. In general, many fault diagnosis methods are developed for open-loop systems and, therefore, not tailored to closed-loop systems, where the controller introduces directional closed-loop dynamics that can mask or amplify fault-induced residuals depending on the propagation of uncertainties through the feedback loop. 

Although several important steps have been taken towards fault diagnosis for complex systems, at present optimal detection of faults in uncertain closed-loop systems in an effective way remains challenging. This article builds on the preliminary work reported in \cite[Chpt. 2]{classensFaultDiagnosisUncertain2024} and aims to develop an optimal $\mathcal{H}_i/\mathcal{H}_\infty$ solution to the fault detection filter design problem for LTI multi-input multi-output (MIMO) uncertain closed-loop systems. The scope is confined to the fault detection task, however, the results can be directly extended for isolation purposes. The solution, solved using a single Riccati equation, is based on an upper-bound realization of the uncertainty and disturbance model and achieves an optimal compromise between the rejection of disturbances and modeling uncertainty with respect to fault sensitivity.

The upper-bound model is derived from worst-case gain analysis for systems with mixed uncertainties, encompassing both dynamic and parametric uncertainties, a known NP-hard problem \cite{braatzComputationalComplexitySpl1994}. To address this, lower and upper bounds are computed via skewed-$\mu$ power iterations \cite{hollandDevelopmentSkewLower2005,roosSystemsModelingAnalysis2013,balasRobustControlToolbox2024} and convex optimization employing D-G scaling \cite{hollandDevelopmentSkewUpper2005,packardComplexStructuredSingular1993,roosSystemsModelingAnalysis2013,balasRobustControlToolbox2024}. 
Skewed-$\mu$ power iterations use a heuristic to identify the parameter or complex matrix value corresponding to the worst-case lower bound at a given frequency. Subsequently, a stable LTI realization is constructed via interpolation \cite{Zhou1996a}. This concept was extended to construct worst-case mixed uncertainty samples that maximize gain across multiple frequencies \cite{patarticsWorstCaseUncertainty2023,patarticsConstructionUncertaintyMaximize2020}, utilizing nonlinear optimization and boundary Nevanlinna-Pick (BNP) interpolation \cite{ballInterpolationRationalMatrix1990}. This yields a stable, norm-bounded LTI uncertainty realization that interpolates a collection of matrix samples, providing a worst-case upper bound for the uncertainty and disturbance model, which is used in the proposed fault detection approach.

In summary, the key contributions are outlined as follows.
\begin{enumerate}[label=C\arabic*]
\label{chpt2:contribution1}
	\item An optimal solution is provided to the $\mathcal{H}_i/\mathcal{H}_\infty$ fault detection problem for closed-loop MIMO LTI uncertain systems. To this end, the uncertainty is extracted and a worst-case upper bound is constructed, which allows to solve the problem via factorization by means of a Riccati equation.
\label{chpt2:contribution2}
    \item The efficacy of the proposed approach is shown using a simulation model of an experimental reticle stage used in the lithography industry.
\end{enumerate}

The subsequent sections of this article are organized as follows: After the preliminaries, the fault detection filter optimization problem is formulated for closed-loop uncertain LTI systems in Section \ref{sec:problem formulation}. Next, in Section \ref{chpt2:sec:Solution}, a solution is proposed that optimally addresses the filter optimization problem, and the upper bound is further examined in Section \ref{sec:validation}. In Section \ref{sec:Example}, a numerical example is presented, demonstrating the application of the proposed solution on an experimental reticle stage. Finally, this article concludes by summarizing key findings.

\section{Notation and Preliminaries}
The sets of real numbers and nonnegative real numbers are indicated by $\mathbb{R}$ and $\mathbb{R}_{\geq 0}$. By $\norm{\cdot}_{2}$ the Euclidean norm is defined. A vector $d$ is a unitary vector if $\|{d}\|_2 = 1$. The maximum and minimum singular values of the matrix $A$ are denoted by $\bar{\sigma}(A)$ and $\underline{\sigma}(A)$, respectively. The real-rational subspace of $\mathcal{H}_{\infty}$ is denoted by $\mathcal{RH}_{\infty}$. The signal $y \in \mathcal{L}_{2}$ if $\norm{y}_{2}^{2} = \int_{0}^{\infty} y^{\top}(t) y(t) \mathrm{d} t < \infty$. 
A transfer function $N$ is called \textit{inner} if $N \in \mathcal{R}\mathcal{H}_{\infty}$ and $N^{H}N = I$ and co-inner if $N \in \mathcal{R}\mathcal{H}_{\infty}$ and $N N^{H} = I$. A transfer function $M$ is called \textit{outer} if $M \in \mathcal{R}\mathcal{H}_{\infty}$ and has full row normal rank and has no open right half plane zeros.

\begin{definition}(Linear fractional transformation) \label{definition:LFT}
For matrices $N$ and $M = \left[\begin{smallmatrix} M_{11} & M_{12} \\ M_{21} & M_{22} \end{smallmatrix} \right]$ of appropriate partitioning, the lower linear fractional transformation (LFT) is defined as $\mathcal{F}_{l}(M,N) = M_{11} + M_{12} N (I - M_{22} N )^{-1} M_{21}$ and the upper LFT as $\mathcal{F}_{u}(M,N) = M_{22} + M_{21} N (I - M_{11} N )^{-1} M_{12}$, under the assumption that the involved matrix inverses exists.
\end{definition}

Consider uncertainties whose values admit the structure
$\mathbf{\Delta}_{s} = \{ \mathrm{diag} ( p_{1} I, \ldots, p_{n_{r}} I, \delta_{1} I,$ $ \ldots, \delta_{n_{c}} I, \Delta_{1}, \ldots, \Delta_{n_{z}} ) \in \mathbb{C}^{p \times q}$ and whose blocks satisfy $p_j \in \mathbb{R}$ with $\abs{p_{j}} \leq 1$ for $j = 1,\ldots, n_{r}$, $\delta_j \in \mathbb{C}$ with $\abs{\delta_{j}} \leq 1$ for $j = 1,\ldots, n_{c}$, and $\Delta_j \in \mathbb{C}^{p_{j} \times q_{j}}$ with $\norm{\Delta_{j}}_{2} \leq 1$ for $j = 1,\ldots, n_{z}$. The actual set of uncertainties $\mathbf{\Delta} := \{ \Delta(s) \in \mathcal{RH}_\infty \; | \; \Delta(i \omega) \in \mathbf{\Delta}_{s} \; \mathrm{for} \; \mathrm{all} \; \omega \in \mathbb{R} \cup \{ \infty \} \}$. 

The structured singular value of a matrix $P$ with respect to the set $\mathbf{\Delta}$ is defined as
\begin{equation}
    \mu_{\mathbf{\Delta}} (P) = \frac{1}{\sup \{ r \; | \; \det(I - P \Delta) \neq 0 \; \mathrm{for} \; \mathrm{all} \; \Delta \in r \mathbf{\Delta} \}}.
\end{equation}

The following definitions are used throughout \cite{dingModelbasedFaultDiagnosis2008,liuOptimalSolutionsMultiobjective2007,Zhou1996a,skogestadMultivariableFeedbackControl2005}.

\begin{definition} \label{chpt2:lem:mingain}
(Minimum gain) The smallest gain of the stable continuous-time LTI system $G: \mathcal{L}_{2} \rightarrow \mathcal{L}_{2}$, named the $\mathcal{H}_{-}$ index, is defined as
\begin{equation}
\norm{G}_{-}^{[\infty]} = \inf_{\omega \in \mathbb{R}_{\geq 0}} \underline{\sigma}(G(j \omega)).
\end{equation}
\end{definition}
The $\mathcal{H}_{-}$ index over a finite frequency range is defined as $\norm{G}_{-}^{[\omega_1,\omega_2]} = \inf_{\omega \in [\omega_1,\omega_2]} \underline{\sigma}(G(j \omega))$ and the $\mathcal{H}_{-}$ index at zero frequency is defined as $\norm{G}_{-}^{[0]} = \underline{\sigma}(G(0))$. The minimum gain is not a norm and therefore named the $\mathcal{H}_{-}$ index. When no superscript is specified, $\norm{G}_{-}$ represents all possible definitions.

\begin{definition}
(Maximum gain)
The $\mathcal{H}_{\infty}$ norm of the continuous-time LTI system $G: \mathcal{L}_{2} \rightarrow \mathcal{L}_{2}$, denoted as $\norm{G}_{\infty}$, is given by
\begin{equation}
\norm{G}_{\infty} = \sup_{\omega \in \mathbb{R}_{\geq 0}} \bar{\sigma} (G(j \omega)).
\end{equation}
\end{definition}

\begin{lemma}(Robust Stability) \label{chpt2:lemma:RS}
Consider the system $N = {\left[\begin{array}{c|cc}
N_{11} \; & \; N_{12} \\
\hline N_{21} \; & \; N_{22} 
\end{array}\right]}$ in the $N\Delta$-structure, i.e., such that $\begin{bmatrix}
    u_{\Delta} \\ z
\end{bmatrix} = N \begin{bmatrix}
    y_{\Delta} \\ w
\end{bmatrix}$ and $y_{\Delta} = \Delta u_{\Delta}$ with $\Delta \in \mathbf{\Delta}$. Assume that the nominal system $N_{22}$ and the perturbations $\Delta$ are stable. Then, $\mathcal{F}_{u}(N,\Delta)$ is stable $\forall \Delta \in \mathbf{\Delta}$, if and only if $\:\: \Leftrightarrow \:\: \mu_{\Delta} (N_{11}) < 1, \forall \omega.$
\end{lemma}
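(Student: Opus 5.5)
The plan is to reduce robust stability of $\mathcal{F}_{u}(N,\Delta)$ to well-posedness of the feedback interconnection between $N_{11}$ and $\Delta$. Write
\begin{equation}
\mathcal{F}_{u}(N,\Delta) = N_{22} + N_{21}\Delta (I - N_{11}\Delta)^{-1} N_{12}.
\end{equation}
Since $N$ and $\Delta$ are stable, the only possible source of instability is the loop factor $(I - N_{11}\Delta)^{-1}$. For the ``only if'' direction to hold, I interpret stability of $\mathcal{F}_{u}(N,\Delta)$ as internal stability of the interconnection, as is standard in the $N\Delta$ framework. Then stability for all $\Delta\in\mathbf{\Delta}$ is equivalent to $(I-N_{11}\Delta)^{-1}\in\mathcal{RH}_\infty$ for all $\Delta\in\mathbf{\Delta}$.

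By the generalized Nyquist criterion, and because $N_{11}\Delta$ is stable, this holds if and only if the locus of $\det(I - N_{11}(j\omega)\Delta(j\omega))$ does not encircle the origin and never passes through it.

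\emph{Sufficiency.} Assume $\mu_{\mathbf{\Delta}}(N_{11}(j\omega))<1$ for all $\omega$. The set $\mathbf{\Delta}$ is closed under scaling by $\tau\in[0,1]$. For each such $\tau$, the matrix $\tau\Delta(j\omega)$ lies in $\mathbf{\Delta}_s$, and the definition of $\mu$ gives $\det(I - N_{11}(j\omega)\tau\Delta(j\omega))\neq 0$ for every $\omega\in\mathbb{R}\cup\{\infty\}$. At $\tau=0$ the determinant is identically $1$, so it has no encirclements. Varying $\tau$ continuously from $0$ to $1$ deforms the Nyquist locus without ever crossing the origin, so the encirclement count stays zero. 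Hence $(I-N_{11}\Delta)^{-1}$ is stable, and so is $\mathcal{F}_{u}$.

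\emph{Necessity.} Suppose $\mu_{\mathbf{\Delta}}(N_{11}(j\omega_0))\geq 1$ at some $\omega_0$. By the definition of $\mu$ and compactness of the unit ball of $\mathbf{\Delta}_s$, there is a constant matrix $\Delta_0\in\mathbf{\Delta}_s$ with $\det(I-N_{11}(j\omega_0)\Delta_0)=0$. The task is to produce a stable, real-rational $\Delta(s)\in\mathbf{\Delta}$ with $\Delta(j\omega_0)=\Delta_0$, so that the closed loop has a pole on the imaginary axis and is therefore not stable. The construction is done block by block:
\begin{itemize}[label={}]
\item \textbf{Real scalar blocks} $p_j$: keep them as constants.
\item \textbf{Complex scalar blocks} $\delta_j=|\delta_j|e^{i\theta}$: realize each with a first-order all-pass, $|\delta_j|\tfrac{s-\alpha}{s+\alpha}$ or its negative, choosing $\alpha\geq 0$ so that the phase at $\omega_0$ matches $\theta$.
\item \textbf{Full complex blocks}: take a rank-one decomposition $\Delta_j=\sigma u v^H$ and interpolate each entry of $u$ and $v$ with such all-pass factors. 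This keeps $\|\Delta_j\|_\infty\leq 1$ (the standard construction in \cite{Zhou1996a}).
\end{itemize}

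The main obstacle is this interpolation step, together with the rank-one reduction it relies on. One must show that $\Delta_0$ can be replaced by a rank-one-per-block perturbation that is still singularizing, while respecting real-rationality and the norm bound. In particular, the cases $\omega_0=0$ and $\omega_0=\infty$, where $\Delta$ must be real, need separate care. Here I would rely on the standard result in \cite{Zhou1996a,skogestadMultivariableFeedbackControl2005} rather than reprove it. With that in place, the two directions combine to give the stated equivalence.
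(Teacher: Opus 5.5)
Away from the boundary frequencies, your argument is the standard textbook proof behind the paper's citation \cite{skogestadMultivariableFeedbackControl2005}. The scaling homotopy with the generalized Nyquist criterion, with $\omega=\infty$ included, gives sufficiency, and all-pass interpolation at a violating $\omega_0\in(0,\infty)$ gives necessity. Strengthening the hypothesis to stability of all of $N$, and reading ``stable'' as internal stability, are appropriate and in fact required. The rank-one reduction you defer is elementary. If $x\neq 0$ satisfies $x=N_{11}(j\omega_0)\Delta_0 x$ and $x_j$ is the sub-vector of $x$ entering the full block $\Delta_j$, replace $\Delta_j$ by $\Delta_j x_j x_j^H/\norm{x_j}_2^2$ (or by $0$ if $x_j=0$). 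This does not increase $\bar{\sigma}$ and leaves $\Delta_0 x$, hence the singularity, unchanged.

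The genuine gap is the case $\omega_0\in\{0,\infty\}$, which you hand off to the literature. It cannot be handed off: with real-rational perturbations in the closed unit ball and the strict condition $\mu<1$, the ``only if'' direction is false there.

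Take a single repeated complex scalar block $\Delta=\delta I_2$ and $N_{11}(s)=\frac{1}{s+1}\left[\begin{smallmatrix}0&1\\-1&0\end{smallmatrix}\right]$, with $N_{12},N_{21},N_{22}$ any stable blocks. Then $\mu_{\mathbf{\Delta}}(N_{11}(j\omega))=\rho(N_{11}(j\omega))=(1+\omega^2)^{-1/2}$. This equals $1$ at $\omega=0$, where the only singularizing values are $\delta=\pm j$. However, $\det(I-N_{11}(s)\delta(s))=1+\delta(s)^2/(s+1)^2$ can vanish in $\mathrm{Re}\,s\geq 0$ only if $\abs{\delta(s)}=\abs{s+1}\geq 1$. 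With $\norm{\delta}_\infty\leq 1$ this forces $s=0$ and $\delta(0)=\pm j$, which is impossible for real-rational $\delta$. The determinant also tends to $1$ at infinity. Hence the loop is internally stable for every $\Delta\in\mathbf{\Delta}$, although $\mu<1$ fails at $\omega=0$.

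A correct proof therefore needs a modified statement. One option is the form of Thm.~11.8 in \cite{Zhou1996a}: the open ball $\norm{\Delta}_\infty<1$ together with $\sup_\omega\mu\leq 1$. For purely complex structures, continuity of $\mu$ then moves a violation $\mu>1$ at $0$ or $\infty$ to a nearby $\omega_0'\in(0,\infty)$, where your interpolation works. With real blocks present this step needs extra care, since mixed $\mu$ can be discontinuous. The other option is to keep the present form but prove necessity at $\omega_0\in\{0,\infty\}$ only when a real singularizing $\Delta_0$ exists. This covers, for example, purely real structures, and a single full complex block via the real SVD of $N_{11}(0)$ or $N_{11}(\infty)$.
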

    The proof is provided in \cite[Chapter 8]{skogestadMultivariableFeedbackControl2005}.

\vspace{5mm}
\section{{$\mathcal{H}_i/\mathcal{H}_\infty$}-problem formulation}
\label{sec:problem formulation}

%
\begin{figure}[bt]
    \centering
    \includegraphics[width = 200pt]{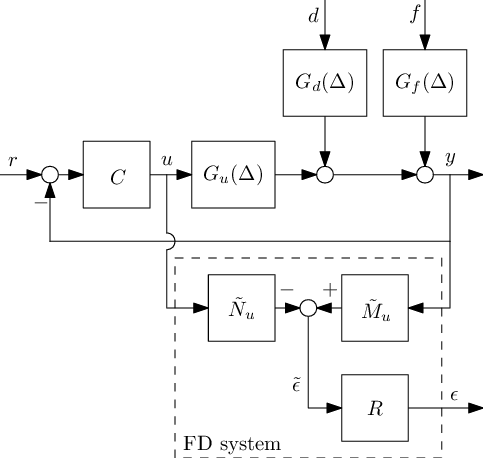}
    \caption{Generic fault detection configuration for uncertain closed-loop controlled systems. The control input $u$ and output $y$ form the inputs for the fault detection (FD) system which generates the residual signal $\epsilon$.}
    \label{chpt2:fig:FD_configuration_uncertain}
\end{figure}
%

Consider the input-output representation of the uncertain LTI process in the Laplace domain, described by
\begin{equation}\label{chpt2:uncertain_y}
    y=G_u(\Delta) u+G_d(\Delta) d+G_f(\Delta) f,
\end{equation}
where $G_u(\Delta)$, $G_d(\Delta)$ and $G_f(\Delta)$ are uncertain transfer function matrices (TFMs) from the control input $u$, disturbance $d$, and fault $f$, to the output $y$. In the time domain, the output at time $t \in \mathbb{R}_{\geq 0} = [0,\infty)$ is given by $y(t) \in \mathbb{R}^{n_y}$, the control input by $u(t) \in \mathbb{R}^{n_u}$, the disturbance by $d(t) \in \mathbb{R}^{n_d}$, and the fault by $f(t) \in \mathbb{R}^{n_f}$. The modelling uncertainty is denoted by $\Delta \in \boldsymbol\Delta$ and can be parametric or dynamic with suitable dimensions.
The system \eqref{chpt2:uncertain_y} is controlled to follow a reference $r$, with value $r(t) \in \mathbb{R}^{n_y}$ at time $t \in \mathbb{R}_{\geq 0} = [0,\infty)$, by means of a robustly stabilizing feedback controller $C$, i.e., $u = C(r-y)$, see Figure \ref{chpt2:fig:FD_configuration_uncertain}. Substitution of the feedback relation into \eqref{chpt2:uncertain_y} results in the closed-loop input-output relation for linear uncertain systems, given by
\begin{equation}
\begin{split}
     y&= S_\Delta\left(G_u(\Delta) C r+G_d(\Delta) d+G_f(\Delta) f\right),
\end{split}\label{chpt2:uncertain_CL_output}
\end{equation}
where $S_\Delta = (I + G_u(\Delta) C)^{-1}$ is the uncertain sensitivity function.

The closed-loop system is augmented with a fault detection system, which takes as inputs the control input $u$ and the output $y$. The fault detection system generates residual signals $\epsilon$, with value $\epsilon(t) \in \mathbb{R}^{n_y}$ at time $t \in \mathbb{R}_{\geq 0} = [0,\infty)$, that allow to detect faults $f$, despite the influence of the external disturbances $r$ and $d$. All residual generators can be parameterized as \cite{dingFaultDetectionFactorization1990}
\begin{equation}\label{chpt2:general_residual_gen}
    \epsilon = R
        \left(\tilde{M}_u y -\tilde{N}_u u\right),
\end{equation}
where $R \in \mathcal{R}\mathcal{H}_\infty^{n_y \times n_y}$ is a post-filter of the pre-residual $\tilde{\epsilon} = \tilde{M}_u y -\tilde{N}_u u$, which in the time-domain takes values $\tilde{\epsilon}(t) \in \mathbb{R}^{n_y}$. The transfer function matrices $\Tilde{M}_u, \Tilde{N}_u \in$ \RHinf form a left coprime factorization of the nominal plant $G_u(0)$, i.e., $G_u(0) = \Tilde{M}_u^{-1}\Tilde{N}_u$.
The uncertain feedback system \eqref{chpt2:uncertain_CL_output} augmented with the fault detection system \eqref{chpt2:general_residual_gen} is graphically depicted in Figure \ref{chpt2:fig:FD_configuration_uncertain}.

Let the uncertain part of the plant $G_u(\Delta)$ be defined as
\begin{equation}\label{chpt2:uncertain_Gu}
\tilde{G}_u(\Delta):=G_u(\Delta)-G_u(0).
\end{equation}
Substitution of \eqref{chpt2:uncertain_CL_output}, the control law $u=C(r-y)$, and \eqref{chpt2:uncertain_Gu} into \eqref{chpt2:general_residual_gen}, gives the residual dynamics for uncertain closed-loop systems as
\begin{align}\label{chpt2:residual_expression_long}
    \epsilon &= R\Tilde{M}_u\{\underbrace{\tilde{G}_u(\Delta) C S_\Delta}_{\textstyle T_{\tilde{\epsilon} r}^\Delta}r + \underbrace{(G_d(\Delta) - \tilde{G}_u(\Delta)CS_\Delta G_d(\Delta)}_{\textstyle T_{\tilde{\epsilon} d}^\Delta})d + \underbrace{(G_f(\Delta) - \tilde{G}_u(\Delta)CS_\Delta G_f(\Delta)}_{\textstyle T_{\tilde{\epsilon} f}^\Delta})f \}.
\end{align}
To enhance readability, let $T_{\tilde{\epsilon} r}(\Delta) = \tilde{M}_u T_{\tilde{\epsilon} r}^\Delta$, $T_{\tilde{\epsilon} d}(\Delta) = \tilde{M}_u T_{\tilde{\epsilon} d}^\Delta$ and $T_{\tilde{\epsilon} f}(\Delta) = \tilde{M}_u T_{\tilde{\epsilon} f}^\Delta$ be defined as the uncertain transfers from the reference $r$, the disturbances $d$, and the faults $f$, to the pre-residual $\tilde{\epsilon}$. 
Then, the residual dynamics for uncertain closed-loop systems can be written as
\begin{equation}\label{chpt2:epsilon_to_rdf}
    \epsilon = R 
    \underbrace{
    \begin{bmatrix}
        T_{\tilde{\epsilon} r}(\Delta) & T_{\tilde{\epsilon} d}(\Delta)
    \end{bmatrix}}_{\tilde{G}_d(\Delta)}
    \underbrace{\begin{bmatrix}
        r\\
        d
    \end{bmatrix}}_{\tilde{d}} +
    R T_{\tilde{\epsilon} f}(\Delta) f,
\end{equation}
where $\tilde{G}_d(\Delta)$ describes the transfer from the extended disturbance input $\tilde{d} = \begin{bmatrix}
    r & d
\end{bmatrix}^\top$ to the pre-residual $\tilde{\epsilon}$.

The residual dynamics clearly show the inherent trade-off between sensitivity to faults and robustness against external disturbances and the effect of modelling uncertainties. It is desirable to minimize $R\tilde{G}_d(\Delta)$ for all $\Delta \in \mathbf{\Delta}$ in order to reduce the impact of $r$ and $d$ on the residual, while, at the same time, it is also desirable to maximize $RT{\tilde{\epsilon} f}(\Delta)$ for all $\Delta \in \mathbf{\Delta}$ to enhance fault sensitivity.

\begin{remark}
    The residual dynamics in \eqref{chpt2:residual_expression_long} are equal to
    \begin{equation*}
    \begin{split}
    \epsilon &= R\tilde{M}_u S^{-1}S_\Delta\Big( \tilde{G}_u(\Delta)CS r + G_d(\Delta) d + G_f(\Delta) f \Big),
    \end{split}
    \end{equation*}
    see \ref{chpt2:Appendix:optimal_ratio}, where $S = (I+G(0)C)^{-1}$ is the nominal sensitivity. This expression is insightful since $S^{-1} S_\Delta$ can be factored out for $r$, $d$, and $f$.
\end{remark}

A natural way to evaluate the robustness against disturbances $\tilde{d}$ is through the $\mathcal{H}_\infty$-norm, whereas characterizing the sensitivity of faults necessitates a more intricate approach. The singular values of a matrix give a measure for the amplification in the direction of maximum action among all directions orthogonal to the singular vectors of any larger singular value. In essence, these give a measure for amplification in the principal directions of a system. In this context, all singular values $\sigma_i\left( R(j\omega) T_{\tilde{\epsilon} f}(j\omega,\Delta) \right), \omega \in [0,\infty)$, $\Delta \in \mathbf{\Delta}$, where $i = 1,\ldots, n_\sigma$ and $n_{\sigma} = \min(n_{y},n_{f})$, together form a measure of the fault sensitivity and the corresponding singular values cover all directions of the subspace spanned by $R(j\omega)T_{\tilde{\epsilon} f}(j\omega,\Delta)$. 

Given the measure for worst-case disturbance amplification and the measure for fault sensitivity, three different robust $\mathcal{H}_i/\mathcal{H}_\infty$ performance indices are defined. Each index is characterized by a progressively stricter criterion for fault sensitivity in its numerator. First, consider the index based on the worst-case fault sensitivity as
\begin{equation}\label{chpt2:performance_index_minmin}
    J_{\omega}(R) = \frac{\inf_{\Delta \in \mathbf{\Delta}} \left( \underline{\sigma} \left( R(j\omega) T_{\tilde{\epsilon} f}(j\omega,\Delta) \right) \right) }{\|R \tilde{G}_d(\Delta)\|_\infty}.
\end{equation}
Focusing on the worst-case in all principal directions gives $n_{\sigma}$ functions per frequency $\omega$ as
\begin{equation}\label{chpt2:performance_index}
    J_{i,\omega}(R) = \frac{\inf_{\Delta \in \mathbf{\Delta}} \left( \sigma_i\left( R(j\omega) T_{\tilde{\epsilon} f}(j\omega,\Delta) \right) \right) }{\|R \tilde{G}_d(\Delta)\|_\infty}.
\end{equation}
Considering the entire set $\bar{\Delta} \in \mathbf{\Delta}$ instead of just the worst-case in each principle direction, gives the criterion
\begin{equation}\label{chpt2:performance_index_wDelta}
    J_{i,\omega,\bar{\Delta}}(R) = \frac{ \sigma_i\left( R(j\omega) T_{\tilde{\epsilon} f}(j\omega,\bar{\Delta}) \right) }{\|R \tilde{G}_d(\Delta)\|_\infty}.
\end{equation}
Here, $\|R \tilde{G}_d(\Delta)\|_\infty$ for all $\Delta \in \mathbf{\Delta}$, whereas the fault sensitivity is based on a single $\bar{\Delta} \in \mathbf{\Delta}$. Both stem from the same set.

The objective is to find the fault detection filter that maximizes the ratio $J_{i,\omega,\bar{\Delta}}$ in \eqref{chpt2:performance_index_wDelta} for all singular values $i = 1,\ldots n_\sigma$, at every frequency $\omega$, and for every realization $\bar{\Delta} \in \mathbf{\Delta}$.

\begin{problem}\label{chpt2:prob:problem_Ding}
    Consider the residual dynamics \eqref{chpt2:residual_expression_long} of a closed-loop uncertain system described by \eqref{chpt2:uncertain_CL_output} and let $\gamma > 0$ be a user-defined combined disturbance and uncertainty rejection level such that $\|R \tilde{G}_d(\Delta)\|_\infty \leq \gamma$. Determine $R \in \mathcal{R}\mathcal{H}_\infty^{n_y \times n_y}$ such that $J_{i,\omega,\bar{\Delta}}(R)$ is maximized, i.e.,
\begin{equation}\label{chpt2:optim_problem}
\begin{split}
    \sup_{\scalebox{.75}{$R \in \mathcal{R}\mathcal{H}_\infty$}} \big\{
    \sigma_i\left( R(j\omega) T_{\tilde{\epsilon} f}(j\omega,\bar{\Delta}) \right) \Big| \|R \tilde{G}_d(\Delta)\|_\infty \leq \gamma
    \big\},
\end{split}
\end{equation}
for all $i = 1,\ldots, n_\sigma$, for all $\bar{\Delta} \in \mathbf{\Delta}$, and for all $\omega \in [0,\infty)$, where $\Delta \in \mathbf{\Delta}$. 
\end{problem}

\begin{remark}
    Note that the introduction of $\gamma>0$ has no effect on the performance indices in \eqref{chpt2:performance_index_minmin} to \eqref{chpt2:performance_index_wDelta}. Since the filter $R$ can be scaled arbitrarily, the bound $\gamma$ merely serves as a scaling parameter that has no influence on the optimal ratio, but ensures that the solution to Problem \ref{chpt2:prob:problem_Ding} is unique.
\end{remark}
\begin{remark}
    Note that Problem \ref{chpt2:prob:problem_Ding} is multiobjective since the solution is a single $R$ that solves the problem for all $i = 1, \ldots, n_\sigma$. Hence, the optimization problem includes the special $\mathcal{H}_{\infty}/\mathcal{H}_{\infty}$ and the $\mathcal{H}_{-}/\mathcal{H}_{\infty}$ objectives
\begin{equation}\label{chpt2:optim_problem2}
    \sup_{\scalebox{.75}{$R\in \mathcal{R}\mathcal{H}_\infty$}} \left\{
    \| R T_{\tilde{\epsilon} f}(\Delta) \|_\infty \Big| \|R \tilde{G}_d(\Delta)\|_\infty \leq \gamma
    \right\}, 
\end{equation}
    for all $\Delta \in \mathbf{\Delta}$ and
\begin{equation}\label{chpt2:optim_problem3}
    \sup_{\scalebox{.75}{$R\in \mathcal{R}\mathcal{H}_\infty$}} \left\{
    \| R T_{\tilde{\epsilon} f}(\Delta) \|_- \Big| \|R \tilde{G}_d(\Delta)\|_\infty \leq \gamma
    \right\}, 
\end{equation}
    for all $\Delta \in \mathbf{\Delta}$, respectively.
\end{remark}

\begin{remark}
(Special case: open loop) Although the framework presented in this study focuses on uncertain closed-loop systems, the same methodology and solution are equally applicable to open-loop uncertain systems. This requires deriving the residual dynamics in \eqref{chpt2:epsilon_to_rdf} for uncertain open-loop systems, yielding a residual $\epsilon$ that depends on the input $u$ rather than the reference $r$.
\end{remark}
\begin{remark} (Special case: nominal systems)
Note that when there is no uncertainty, i.e., $\Delta = 0$ and thus $T_{\tilde{\epsilon} r}^\Delta = 0, T_{\tilde{\epsilon} d}^\Delta = 0$, and $T_{\tilde{\epsilon} f}^\Delta = 0$, the residual dynamics in \eqref{chpt2:epsilon_to_rdf} simplify to the same expression for closed (and open)-loop nominal systems, that is
    \begin{equation*}
        \epsilon = R\tilde{M}_u(G_dd + G_ff).
    \end{equation*}
    Hence, the residual only depends on disturbances $d$ and possible faults $f$, and is independent of the reference $r$ \cite{Classens2021}. In this case, the FD filter design problem is equivalent for open and closed-loop nominal systems.
\end{remark}

\section{Robust closed-loop fault detection solution} \label{chpt2:sec:Solution}
In this section, the solution to Problem \ref{chpt2:prob:problem_Ding} is presented, which is contribution \ref{chpt2:contribution1}. First, an upper-bound envelope is established to encapsulate the worst-case scenario of the uncertain exogenous disturbances. Subsequently, this envelope is employed to solve the optimal robust fault detection filter design problem using the methodology originally developed for the uncertainty-free case in \cite{liuOptimalSolutionsMultiobjective2007}. This is followed by several remarks.

\begin{lemma} \label{chpt2:definition:untight}
Let $\bar{G}_d \in \mathcal{R}\mathcal{H}_\infty^{n_{y} \times (n_{y} + n_{d})}$ be an upper-bound envelope of $\tilde{G}_d(\Delta)$ which for all unitary vectors $\tilde{d}_2$ satisfies
\begin{equation}\label{chpt2:upperbound}
    \|\tilde{G}_d(j\omega,\Delta)\tilde{d}_1\|_2 \leq \|\bar{G}_d(j\omega)\tilde{d}_2\|_2 \quad \forall \omega, \forall \Delta \in \mathbf{\Delta},
\end{equation}
where $\tilde{d}_1$ is a corresponding unitary vector such that 
$\tilde{G}_d(j\omega,\Delta) d_{1} (j \omega) = g \bar{G}_d(j\omega) d_{2} (j \omega)$, with $g \in \mathbb{R}_{\geq 0}$. Then, condition \eqref{chpt2:upperbound} is satisfied if $\bar{\sigma}( \bar{G}_{do}^{-1}(j\omega) \tilde{G}_d(j\omega,\Delta) )$ $\leq 1$ for all $\omega$ and all $\Delta \in \mathbf{\Delta}$, where $\bar{G}_{do}$ is the outer of $\bar{G}_{d}$.
\end{lemma}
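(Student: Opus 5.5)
The plan is to reduce condition \eqref{chpt2:upperbound} to a pointwise-in-frequency matrix inequality and then use the outer factor to normalize. Fix $\omega$ and $\Delta \in \mathbf{\Delta}$, and write $\bar{G}_d = \bar{G}_{di}\bar{G}_{do}$ as an inner-outer factorization. On the imaginary axis this gives $\bar{G}_d^H(j\omega)\bar{G}_d(j\omega) = \bar{G}_{do}^H(j\omega)\bar{G}_{di}^H(j\omega)\bar{G}_{di}(j\omega)\bar{G}_{do}(j\omega) = \bar{G}_{do}^H(j\omega)\bar{G}_{do}(j\omega)$. Equivalently, in the co-outer reading appropriate for the $n_y\times(n_y+n_d)$ shape, $\bar{G}_d\bar{G}_d^H = \bar{G}_{do}\bar{G}_{do}^H$ with $\bar{G}_{do}$ square and invertible on the axis. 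I will assume $\bar{G}_d(j\omega)$ has full row rank $n_y$ for every $\omega$, so that $\bar{G}_{do}^{-1}(j\omega)$ exists, as the statement implicitly requires. The first step is therefore to record that the output-side Gramians of $\bar{G}_d$ and $\bar{G}_{do}$ coincide.

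Next I translate the hypothesis. The condition $\bar{\sigma}(\bar{G}_{do}^{-1}\tilde{G}_d)\le 1$ is equivalent to
\[
\tilde{G}_d(j\omega,\Delta)\tilde{G}_d^H(j\omega,\Delta) \preceq \bar{G}_{do}(j\omega)\bar{G}_{do}^H(j\omega) = \bar{G}_d(j\omega)\bar{G}_d^H(j\omega).
\]
This follows by pre- and post-multiplying $\bar{G}_{do}^{-1}\tilde{G}_d\tilde{G}_d^H\bar{G}_{do}^{-H}\preceq I$ by $\bar{G}_{do}$ and $\bar{G}_{do}^H$. The inequality says that the image of the unit ball under $\tilde{G}_d(j\omega,\Delta)$ lies inside the image of the unit ball under $\bar{G}_d(j\omega)$. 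This is the Douglas range-inclusion lemma: there is a contraction $K$ with $\tilde{G}_d = \bar{G}_d K$, and concretely one can take $K=\bar{G}_d^{H}(\bar{G}_d\bar{G}_d^H)^{-1}\tilde{G}_d$ and check $\bar{\sigma}(K)\le1$ using the previous display.

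Then I match the directional structure in the statement. Given a unitary $\tilde{d}_1$, set $v = K\tilde{d}_1$, which satisfies $\|v\|_2\le 1$. If $v=0$, then $\tilde{G}_d\tilde{d}_1=0$ and the condition holds with $g=0$. Otherwise put $\tilde{d}_2 = v/\|v\|_2$ and $g=\|v\|_2\in[0,1]$. This gives $\tilde{G}_d\tilde{d}_1 = g\,\bar{G}_d\tilde{d}_2$, hence $\|\tilde{G}_d\tilde{d}_1\|_2 = g\|\bar{G}_d\tilde{d}_2\|_2\le\|\bar{G}_d\tilde{d}_2\|_2$. This is exactly \eqref{chpt2:upperbound} with corresponding unit vectors related as in the statement. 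Since $\omega$ and $\Delta$ were arbitrary, the claim follows.

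The main obstacle is interpretive rather than computational. The statement's quantifier, ``for all unitary $\tilde{d}_2$'', with $\tilde{d}_1$ ``corresponding'', must be read as the directional correspondence $\tilde{G}_d\tilde{d}_1 = g\bar{G}_d\tilde{d}_2$. The lemma should then be proved in the form: for every direction reached by $\tilde{G}_d$, the envelope $\bar{G}_d$ reaches the same output direction with at least the same gain. I would state this reading explicitly. I would also verify the inner/co-inner convention so that the Gramian identity is on the correct (output) side, which is what makes the condition on $\bar{G}_{do}^{-1}\tilde{G}_d$ the right one.
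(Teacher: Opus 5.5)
Your argument is correct under the reading you state, and it takes a different route from the paper. The paper argues vector by vector. From $\bar{\sigma}(\bar{G}_{do}^{-1}\tilde{G}_d)\le 1$ it gets $\|\bar{G}_{do}^{-1}\tilde{G}_d\tilde{d}_1\|_2\le 1=\|\tilde{d}_2\|_2$. It then replaces $\|\tilde{d}_2\|_2$ by $\|\bar{G}_{di}\tilde{d}_2\|_2$ via the co-inner factor. Finally it pushes the scalar $g$ through the invertible $\bar{G}_{do}^{-1}$, using $\bar{G}_{do}^{-1}\tilde{G}_d\tilde{d}_1=g\,\bar{G}_{di}\tilde{d}_2$, to conclude $g\le 1$. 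You instead turn the hypothesis into $\tilde{G}_d\tilde{G}_d^H\preceq\bar{G}_d\bar{G}_d^H$. You then use the Douglas factorization $\tilde{G}_d=\bar{G}_dK$ with $\bar{\sigma}(K)\le 1$ to construct the matching $\tilde{d}_2$ explicitly.

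The paper's route is shorter and needs no factorization lemma, but it leaves the quantifiers implicit. Its step $\|\tilde{d}_2\|_2\le\|\bar{G}_{di}(j\omega)\tilde{d}_2\|_2$ generally goes the wrong way for a co-inner factor. One only has $\|\bar{G}_{di}\tilde{d}_2\|_2\le\|\tilde{d}_2\|_2$, with equality iff $\tilde{d}_2\perp\ker\bar{G}_d(j\omega)$. So ``for all unit $\tilde{d}_2$'' cannot be meant literally. Take $\bar{G}_d=\tilde{G}_d=[1\;\;0]$, $\tilde{d}_2=[\varepsilon\;\;\sqrt{1-\varepsilon^2}]^\top$ and $\tilde{d}_1=[1\;\;0]^\top$. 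The hypothesis holds, yet $g=1/\varepsilon$ and $\|\tilde{G}_d\tilde{d}_1\|_2=1>\varepsilon=\|\bar{G}_d\tilde{d}_2\|_2$.

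Your construction avoids this problem. Your $\tilde{d}_2\propto K\tilde{d}_1$ lies in the range of $\bar{G}_d^H(j\omega)$, which is exactly where the paper's step is valid. Your quantifier order, ``for every $\tilde{d}_1$ there is a $\tilde{d}_2$'', gives a statement that is both true and nontrivial. It also shows the converse. The directional condition says that $\tilde{G}_d$ maps the unit ball into the image of the unit ball under $\bar{G}_d$, which is equivalent to $\tilde{G}_d\tilde{G}_d^H\preceq\bar{G}_d\bar{G}_d^H$. So under your reading the ``if'' is actually an ``iff''.

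One cleanup: drop the identity $\bar{G}_d^H\bar{G}_d=\bar{G}_{do}^H\bar{G}_{do}$ from your first paragraph. It belongs to the inner--outer factorization, whose outer factor is $n_y\times(n_y+n_d)$ and not invertible. It is therefore not equivalent to the co-outer identity you actually use. All you need is $\bar{G}_d=\bar{G}_{do}\bar{G}_{di}$ with $\bar{G}_{di}\bar{G}_{di}^H=I$ on the axis, as in Theorem~\ref{chpt2:theorem:main_solution}.
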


The proof is provided in \ref{app:proof:definition:untight}.

\begin{lemma} \label{chpt2:definition:tight}
Let $\bar{G}_d \in \mathcal{R}\mathcal{H}_\infty^{n_{y} \times (n_{y} + n_{d})}$ be a tight upper-bound envelope of $\tilde{G}_d(\Delta)$ which for all unitary vectors $\tilde{d}_2$ satisfies
\begin{equation} \label{chpt2:upperbound_2}
    \sup_{\Delta \in \mathbf{\Delta}} \|\tilde{G}_d(j\omega,\Delta)\tilde{d}_1\|_2 = \|\bar{G}_d(j\omega)\tilde{d}_2\|_2 \quad \forall \omega,
\end{equation}
where $\tilde{d}_1$ is a corresponding unitary vector such that 
$\tilde{G}_d(j\omega,\Delta) d_{1} (j \omega) = g \bar{G}_d(j\omega) d_{2} (j \omega)$, with $g \in \mathbb{R}_{\geq 0}$. Then, condition \eqref{chpt2:upperbound_2} is satisfied if and only if
$\sup_{\Delta \in \mathbf{\Delta}} \{ \sigma_i( \bar{G}_{do}^{-1}(j\omega) \tilde{G}_d(j\omega,\Delta) ) \} = 1$ for each $i = 1, \ldots, n_{\sigma}$, at every $\omega$, where $\bar{G}_{do}$ is the outer of $\bar{G}_{d}$.
\end{lemma}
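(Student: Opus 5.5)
The plan is to reduce everything to a pointwise-in-frequency statement about the matrix $\bar{G}_{do}^{-1}(j\omega)\tilde{G}_d(j\omega,\Delta)$, using the inner--outer factorization $\bar{G}_d = \bar{G}_{do}\bar{G}_{di}$ with $\bar{G}_{di}$ co-inner. I will treat this exactly as in the proof of Lemma~\ref{chpt2:definition:untight}, but replace the inequalities by suprema and equalities.

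\textbf{Step 1.} Since $\bar{G}_{di}(j\omega)\bar{G}_{di}^{H}(j\omega) = I$, the set $\{\bar{G}_d(j\omega)\tilde{d}_2 : \|\tilde{d}_2\|_2 = 1\}$ coincides with $\{\bar{G}_{do}(j\omega) v : \|v\|_2 \le 1\}$. Its boundary directions are realized by unitary $v$. Hence the envelope $\bar{G}_d$ is characterized at each $\omega$ by the ellipsoid $\mathcal{E}_\omega = \bar{G}_{do}(j\omega)\mathcal{B}$. It is nondegenerate because $\bar{G}_{do}$ is outer, hence square and invertible on the imaginary axis; I assume no $j\omega$-axis zeros, or handle them by a limiting argument.

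\textbf{Step 2.} I then reinterpret \eqref{chpt2:upperbound_2}. The vectors $\tilde{d}_1,\tilde{d}_2$ are coupled so that $\tilde{G}_d\tilde{d}_1$ and $\bar{G}_d\tilde{d}_2$ are aligned, with $g\ge 0$. The condition therefore says that, for every output direction, the largest output over $\Delta\in\mathbf{\Delta}$ and unitary inputs reaches exactly the boundary of $\mathcal{E}_\omega$ in that direction.

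Writing $H_\Delta = \bar{G}_{do}^{-1}\tilde{G}_d(j\omega,\Delta)$ maps $\mathcal{E}_\omega$ to the unit ball. The condition then becomes
\[
\bigcup_{\Delta\in\mathbf{\Delta}} H_\Delta\mathcal{B} \subseteq \mathcal{B},
\]
with the boundary touched in every direction.

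\textbf{Step 3 (sufficiency and necessity).} By Lemma~\ref{chpt2:definition:untight}, containment is equivalent to $\bar{\sigma}(H_\Delta)\le 1$ for all $\Delta$.

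Touching the sphere is where the individual singular values enter. I will argue that the union of the images $H_\Delta\mathcal{B}$ reaches the unit sphere along each of the $n_\sigma$ principal axes of the envelope iff $\sup_\Delta \sigma_i(H_\Delta) = 1$ for each $i$. To do this I will use the min--max (Courant--Fischer) characterization
\[
\sigma_i(H) = \max_{\dim V = i}\ \min_{x\in V,\|x\|=1}\|Hx\|.
\]

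\begin{itemize}
\item \emph{Sufficiency.} If $\sup_\Delta\sigma_i = 1$ for all $i$, then $\sup_\Delta\sigma_{n_\sigma}(H_\Delta)=1$ together with $\bar\sigma\le 1$ forces the supremizing $H_\Delta$ to approach an isometry on an $n_\sigma$-dimensional subspace. Every output direction is then reached with length $1$, which gives equality in \eqref{chpt2:upperbound_2}.
\item \emph{Necessity.} If for some $i$ the supremum is strictly less than $1$, then some direction of $\mathcal{E}_\omega$ is not attained. By compactness of $\mathbf{\Delta}_s$ at fixed $\omega$ and continuity of singular values, the supremum is attained, and this contradicts tightness.
\end{itemize}

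The main obstacle is this last equivalence. The monotonicity $\sigma_{n_\sigma}\le\sigma_i\le\sigma_1$ makes ``$\sup\sigma_i=1$ for each $i$'' essentially equivalent to ``$\sup\sigma_{n_\sigma}=1$ and $\bar\sigma\le1$''. I will need to check carefully that ``touching in every direction'' does not allow different $\Delta$ for different directions in a way that would break the reverse implication.

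If that happens, the interpretation I will commit to is the following. \eqref{chpt2:upperbound_2} is read per principal direction $i$, where the worst case may be a different $\Delta$ for each $i$. Under that reading, the per-$i$ suprema equal to one is exactly the statement, since singular values of $H_\Delta$ measure the output-to-envelope ratio along the principal directions.
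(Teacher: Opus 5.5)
Your Steps 1--2 and the sufficiency half are sound, and they follow the same route as the paper. You normalize by the outer factor so the envelope becomes the unit ball, then move back by premultiplying with $\bar{G}_{do}$, which preserves directions and the ratio $g$. Two small corrections:
\begin{itemize}
\item Containment $H_\Delta\mathcal{B}\subseteq\mathcal{B}$ is equivalent to $\bar{\sigma}(H_\Delta)\le 1$ by an elementary argument. Lemma~\ref{chpt2:definition:untight} only supplies one direction of this.
\item Sufficiency needs $i$ to run over all $n_y$ singular values of $H_\Delta=\bar{G}_{do}^{-1}\tilde{G}_d(j\omega,\Delta)$. It is $\sup_\Delta\sigma_{n_y}(H_\Delta)=1$, together with $\bar{\sigma}(H_\Delta)\le 1$, that yields a sequence with $H_{\Delta_k}H_{\Delta_k}^H\to I$, and hence every direction reached with length one. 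If only $\min(n_y,n_f)<n_y$ singular values are constrained, this implication fails.
\end{itemize}

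The genuine gap is the necessity bullet. Under the reading you adopt in Step~2, each output direction may be reached by its own $\Delta$. So $\sup_\Delta\sigma_i(H_\Delta)<1$ for some $i\ge 2$ does not imply that some direction is missed. Compactness only tells you the $\sigma_i$-supremum is attained at some $\Delta^\star$, which says nothing about what other $\Delta$'s reach.

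In fact the implication is false. Take $n_y=2$ and $\tilde{G}_d(j\omega,\Delta)=\bar{G}_{do}(j\omega)\,[\,\Delta D\ \ 0\,]$, with $D=\mathrm{diag}(1,\tfrac{1}{10})$ and $\Delta$ a full $2\times 2$ complex block.
\begin{itemize}
\item $\bar{\sigma}(H_\Delta)\le 1$ for all $\Delta$.
\item Every unit $v$ is reached with length one by a unitary $\Delta$ with $\Delta e_1=v$. In the original coordinates this corresponds to $\tilde{d}_2=\bar{G}_{di}^H v$ and $g=1$, so \eqref{chpt2:upperbound_2} holds in your sense.
\item Yet $\sigma_2(H_\Delta)\le\bar{\sigma}(\Delta)\,\sigma_2(D)\le\tfrac{1}{10}$ for every $\Delta$.
\end{itemize}

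Your fallback does not repair this. After normalization the envelope is the unit ball, which has no principal axes. So ``ratio along principal direction $i$'' can only refer to the $\Delta$-dependent singular directions of $H_\Delta$, and then you have simply redefined the hypothesis to be the conclusion.

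The paper's proof has the same hole. It asserts the equivalence between the singular-value condition and per-direction touching as an ``observation''. It then gets the converse by ``executing the steps in opposite direction'', which is exactly the step that fails.

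What you can actually prove is one of the following:
\begin{itemize}
\item the ``if'' direction alone; or
\item a genuine equivalence, if tightness is read as a single $\Delta$ (or a single sequence $\Delta_k$) being worst case in all directions simultaneously.
\end{itemize}
Under that second reading the condition is equivalent to $\bar{\sigma}(H_\Delta)\le 1$ for all $\Delta$ together with $\sup_\Delta\sigma_{n_y}(H_\Delta)=1$. Necessity is then immediate, because touching every direction with one $\Delta$ forces $H_\Delta H_\Delta^H=I$; for a sequence, it forces $H_{\Delta_k}H_{\Delta_k}^H\to I$.
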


The proof is provided in \ref{app:proof:definition:tight}.

Hence, a tight upper bound implies that the worst-case amplification of $\tilde{G}_d(j\omega,\Delta)$ is equal to the amplification of $\bar{G}_d(j\omega)$ at every $\omega$. Next, the complete solution to the robust fault detection filter design problem is presented.

\begin{theorem} \label{chpt2:theorem:main_solution}
  Let the optimal fault detection filter which solves Problem \ref{chpt2:prob:problem_Ding} be parameterized as 
        \begin{equation}
            \epsilon = R_{\mathrm{opt}} \label{chpt2:eq:eps_opt}
            \begin{bmatrix}
                \tilde{M}_u & -\tilde{N}_u
            \end{bmatrix}
            \begin{bmatrix}
                y\\
                u
            \end{bmatrix},
        \end{equation}
        and assume that the following assumptions are satisfied.
    \begin{enumerate}
    \setcounter{enumi}{0}
           \item $\bar{G}_d$ has state space representation $(A, B_d, C, D_d)$ with $A$ Hurwitz and $(A,C)$ detectable.\label{chpt2:assumption_1}
            \item $D_d$ has full row rank.\label{chpt2:assumption_2}
            \item $\bar{G}_d$ has no transmission zeros on the imaginary axis.\label{chpt2:assumption_3}
           \item $\bar{G}_d$ is a tight upper-bound realization of $\tilde{G}_d(\Delta)$. \label{chpt2:assumption_4}
        \end{enumerate}
        Then, there exists a co-inner-outer factorization of $\bar{G}_d = G_{do} G_{di}$ with the optimal post-filter $R_{\mathrm{opt}} = \gamma G_{do}^{-1}$ and $\Tilde{M}_u,\Tilde{N}_u \in$ \RHinf any LCF of the nominal system $G_u(0)$, which achieves for all $\omega$, $i=1,\ldots, n_\sigma$, and $\bar{\Delta} \in \mathbf{\Delta}$
        \begin{equation*}
            \sup_{R \in \mathcal{R}\mathcal{H}_\infty} J_{i,\omega,\bar{\Delta}}(R_{\mathrm{opt}}) = 
            \sigma_i\left( R(j\omega) T_{\tilde{\epsilon} f}(j\omega,\bar{\Delta}) \right).
        \end{equation*}          
        The corresponding state-space representation of $R_{\mathrm{opt}}$ is given by
        \begin{equation}
            R_{\mathrm{opt}} = \gamma
            \left[\begin{array}{c|c}
A+L_0 C & L_0 \\
\hline R_d^{-1 / 2} C & R_d^{-1 / 2}
\end{array}\right] \in \mathcal{R}\mathcal{H}_\infty,
        \end{equation}
        in which $R_d := D_d D_d^T > 0$ and $Y \geq 0$ is the stabilizing solution to the Riccati equation
        \begin{align*}
            \left( A-B_d D_d^T R_d^{-1} C \right)Y + 
            Y\left( A-B_d D_d^T R_d^{-1} C \right)^T - YC^TR_d^{-1}CY + 
            B_d\left( I-D_d^TR_d^{-1} D_d \right)B_d^T = 0,
        \end{align*}
        such that $A-B_d D_d^T R_d^{-1} C - Y C^T R_d^{-1} C$ is stable and 
        \begin{equation}
            L_0 := -\left( B_d D_d^T + YC^T \right)^T.
        \end{equation}
\end{theorem}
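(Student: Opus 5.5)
The plan has two parts: a factorization step that follows the uncertainty-free argument of \cite{liuOptimalSolutionsMultiobjective2007}, and a robust step that replaces the uncertain disturbance channel by the tight envelope through Lemma \ref{chpt2:definition:tight}.

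\emph{Step 1: the factorization.} Under Assumptions \ref{chpt2:assumption_1}--\ref{chpt2:assumption_3}, a co-inner-outer factorization $\bar{G}_d = G_{do}G_{di}$ exists with $G_{do}$ square and invertible in $\mathcal{RH}_\infty$:
\begin{itemize}
\item $R_d = D_dD_d^T > 0$ by full row rank of $D_d$;
\item the absence of imaginary-axis zeros, together with detectability of $(A,C)$, makes the stated Riccati equation admit a stabilizing solution $Y \geq 0$.
\end{itemize}
Define $G_{do} = \left[\begin{smallmatrix} A & -L_0 R_d^{1/2}\\ \hline C & R_d^{1/2}\end{smallmatrix}\right]$. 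A standard state-space computation, using the Riccati equation to cancel cross terms, shows that $G_{do}^{-1}\bar{G}_d$ is co-inner, i.e.\ $G_{do}^{-1}\bar{G}_d\bar{G}_d^{H}G_{do}^{-H} = I$. Inverting the realization of $G_{do}$ gives the displayed formula for $R_{\mathrm{opt}}$. Its $A$-matrix $A+L_0C = A - B_dD_d^TR_d^{-1}C - YC^TR_d^{-1}C$ is stable by the stabilizing property of $Y$, so $R_{\mathrm{opt}}\in\mathcal{RH}_\infty$.

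\emph{Step 2: feasibility.} Since $G_{di}$ is co-inner, $\|R_{\mathrm{opt}}\bar{G}_d\|_\infty = \gamma\|G_{di}\|_\infty = \gamma$. By Assumption \ref{chpt2:assumption_4} and Lemma \ref{chpt2:definition:tight}, $\sup_\Delta \bar\sigma(G_{do}^{-1}\tilde{G}_d(\Delta)) = 1$ at every $\omega$. Hence $\|R_{\mathrm{opt}}\tilde G_d(\Delta)\|_\infty \le \gamma$ for all $\Delta\in\mathbf{\Delta}$, and the supremum over $\Delta$ attains $\gamma$, so $R_{\mathrm{opt}}$ is feasible for Problem \ref{chpt2:prob:problem_Ding}.

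\emph{Step 3: optimality.} Take any feasible $R$. By tightness, the worst-case amplification of $\tilde G_d(j\omega,\Delta)$ in every direction equals that of $\bar G_d(j\omega)$, so the constraint $\|R\tilde G_d(\Delta)\|_\infty\le\gamma$ for all $\Delta$ is equivalent, frequency-wise, to $R(j\omega)\bar G_d(j\omega)\bar G_d^H(j\omega)R^H(j\omega)\le \gamma^2 I$. This equivalence, and specifically the direction that a feasible $R$ against all $\Delta$ is feasible against $\bar G_d$, is where I expect the main difficulty. I would establish it via Lemma \ref{chpt2:definition:tight}: the supremum of each singular value of $G_{do}^{-1}\tilde G_d(\Delta)$ equals one. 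I would attempt a limiting argument, picking for each output direction a sequence $\Delta_k$ that nearly attains it; if that fails, I would instead argue directly from the paper's direction-wise definition of tightness.

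With the envelope condition in hand, write $Q = R\,G_{do}$. The inequality becomes $QQ^H \le \gamma^2 I$, i.e.\ $\bar\sigma(Q(j\omega))\le\gamma$. Then
\[
\sigma_i(RT_{\tilde\epsilon f}(\bar\Delta)) = \sigma_i(Q\,G_{do}^{-1}T_{\tilde\epsilon f}(\bar\Delta)) \le \bar\sigma(Q)\,\sigma_i(G_{do}^{-1}T_{\tilde\epsilon f}(\bar\Delta)) \le \sigma_i(R_{\mathrm{opt}}T_{\tilde\epsilon f}(\bar\Delta)),
\]
using the singular value inequality $\sigma_i(XY)\le\bar\sigma(X)\sigma_i(Y)$. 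The bound holds for every $i$, $\omega$ and $\bar\Delta$, and equality is attained by $R_{\mathrm{opt}}$, i.e.\ $Q=\gamma I$. This gives the claimed supremum, and therefore the multiobjective optimality. Finally, the choice of LCF is irrelevant, because any two LCFs differ by a unit in $\mathcal{RH}_\infty$ that is absorbed into $\tilde G_d$ and $T_{\tilde\epsilon f}$ together, so the ratio is unchanged.
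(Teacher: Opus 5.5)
Your proposal is correct and takes the same route as the paper: replace the uncertain channel by the tight envelope, factor $\bar{G}_d = G_{do}G_{di}$, write $R$ through $Q = R\,G_{do}$ (the paper's $R=\gamma Q G_{do}^{-1}$), apply $\sigma_i(QX)\le\bar{\sigma}(Q)\,\sigma_i(X)$ with equality at $R_{\mathrm{opt}}$, and read off the realization from the standard co-outer formulas. The step you flag, that feasibility for all $\Delta$ forces $\bar{\sigma}(Q(j\omega))\le\gamma$, is one the paper only asserts, and it follows directly from the smallest-singular-value case of Lemma~\ref{chpt2:definition:tight}: since $\bar{\sigma}(QM)\ge\bar{\sigma}(Q)\,\underline{\sigma}(M)$ for $M = G_{do}^{-1}(j\omega)\tilde{G}_d(j\omega,\Delta)$, no direction-by-direction limiting argument is needed.
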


The proof is provided in \ref{app:proof:theorem:main_solution}.

Next, a series of observations is presented concerning assumptions \ref{chpt2:assumption_1} to \ref{chpt2:assumption_3}. Following this, an interpretable explanation of the optimal filter is given and an analysis is provided on the implications of the derived result with respect to Assumption \ref{chpt2:assumption_4}.
\begin{remark}
Although it may appear that requiring $A$ to be Hurwitz in Assumption \ref{chpt2:assumption_1} is restrictive, this is not the case. In fact, the system described by the TFM $\bar{G}_d$ is always stable, as it characterizes the closed-loop TFM $\tilde{d} \rightarrow \tilde{\epsilon}$.
\end{remark}

\begin{remark}
Assumption \ref{chpt2:assumption_2} implies that $n_y \leq n_d$ and that all outputs are subject to some form of disturbance or measurement noise. It can be argued that this assumption can be made without loss of generality, as it is practically impossible to obtain perfect measurements in any system. Additionally, it is reasonable to assume that the measurement noise is independent of each other. Hence, it is reasonable to assume that the disturbance matrix $D_{d}$ has full row rank. Readers are referred to \cite{liuOptimalSolutionsMultiobjective2007} for an adaptation of the state-space matrices of $G_{d}$ in case its $D$ matrix is not full row rank.
\end{remark}

\begin{remark}
In numerous applications, Assumption \ref{chpt2:assumption_3} is not considered restrictive, as many systems inherently include a certain level of damping, resulting in zeros positioned away from the imaginary axis. However, in cases where Assumption \ref{chpt2:assumption_3} is not satisfied, alternative approaches are available if $\bar{G}_d$ has zeros on the imaginary axis \cite{Glover2011}.
\end{remark}

%
\begin{figure}[t]
    \centering
    \includegraphics[width=200pt]{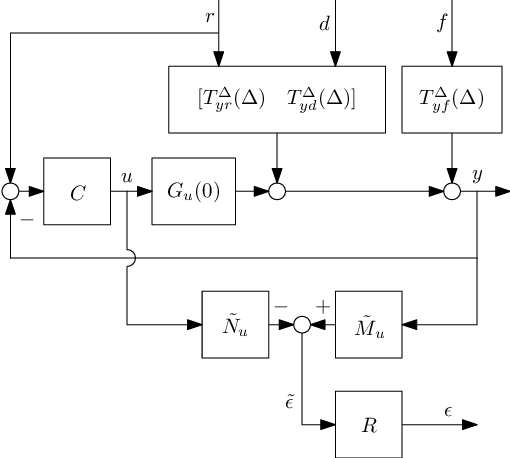}
    \caption{Generic fault detection configuration for uncertain closed-loop controlled systems. The control input $u$ and output $y$ form the inputs for the fault detection (FD) system which generates the residual signal $\epsilon$.}
    \label{chpt2:fig:FD_configuration_uncertain_CL_as_OL}
\end{figure}
\begin{remark}
    Note that the achieved performance $J_{i,\omega,\bar{\Delta}}(R)$ is independent of the choice of observer gain matrices $L$ in the LCF of the nominal plant $G(0)$ used in the filter design, see  \cite[Theorem 13.37]{Zhou1996a}.
\end{remark}

\begin{remark}
    Note that the solution is completely determined by the uncertainty and disturbance models encaptured in $\tilde{G}_{d}(\Delta)$ and is therefore independent of the fault model $\tilde{G}_{f}(\Delta)$.
\end{remark}

The presented strategy can be interpreted as follows. First, the closed-loop configuration in Figure \ref{chpt2:fig:FD_configuration_uncertain} is reconfigured to the form in Figure \ref{chpt2:fig:FD_configuration_uncertain_CL_as_OL}. Here, the uncertainty inherently present in the closed loop is located in the blocks $[T_{yr}^{\Delta}(\Delta) \quad T_{yd}^{\Delta}(\Delta)]$ and $T_{yf}^{\Delta}(\Delta)$. The LCF $\tilde{N}_u$, $\tilde{M}_u$ cancels the effect of the nominal closed loop and the remaining robust fault-to-disturbance optimization is performed by $R$. To be precise, this filter is based on the overbound $\bar{G}_{d}$ of $\tilde{M}_{u} [T_{yr}^{\Delta}(\Delta) \quad T_{yd}^{\Delta}(\Delta)]$.

To interpret the optimal post-filter, $R = \gamma G_{do}^{-1}$, recall that the co-outer of a TFM can be interpreted as a frequency domain magnitude profile. Hence, by multiplication with the inverse of the magnitude profile, the worst-case influence of the exogenous disturbance $\tilde{d}$ on the residuals $\epsilon$ is homogenized across the entire subspace. As a consequence, the optimal solution affects the TFM from faults $f$ to residuals $\epsilon$ by an inverse weighting of this magnitude profile.

To attain the supremum as defined in \eqref{chpt2:optim_problem} and thereby achieve optimal fault detection performance, it is imperative that the upper-bound realization satisfies Assumption \ref{chpt2:assumption_4}. This condition is inherently met if $\bar{G}_d \in \tilde{G}_d(\Delta)$ and \eqref{chpt2:upperbound} holds for all frequencies $\omega$ and uncertainties $\Delta\in\mathbf{\Delta}$. In such instances, the worst-case realization consistently bounds the other realizations across all frequencies from above. However, if the upper bound complies with \eqref{chpt2:upperbound} but is not tight, conservatism is introduced resulting in a suboptimal, yet still robust, filter in view of \eqref{chpt2:optim_problem}. In this case, the result is merely optimal given the conservative upper bound.

Next, the design of an optimal fault detection filter and an approach to minimize conservatism in the upper-bound envelope $\bar{G}_d$ is examined. Additionally, remarks are given regarding the uncertainty modeling. Subsequently, the optimal filter is computed and applied to a next-generation motion stage, and the time-domain responses for fault detection are shown.

\section{Verification and synthesis of the upper bound} \label{sec:validation}
The optimal filter in Theorem~\ref{chpt2:theorem:main_solution} relies on the existence of a tight upper-bound envelope $\bar{G}_d$, as required by Assumption~\ref{chpt2:assumption_4}. The practical realization of this assumption requires both the verification of candidate envelopes against Lemmas~\ref{chpt2:definition:untight} and \ref{chpt2:definition:tight}, and the construction of suitable envelopes in practice. This section addresses these aspects. First, verification conditions based on the structured singular value are presented. Subsequently, two methods are proposed for synthesizing candidate upper-bound envelopes.

\subsection{Verification of upper-bound envelopes}
Given a candidate upper-bound envelope $\bar{G}_d$, it must be verified whether the conditions of Lemma~\ref{chpt2:definition:untight} or Lemma~\ref{chpt2:definition:tight} are satisfied. This verification determines whether $\bar{G}_d$ is admissible for the robust fault detection filter design of Theorem~\ref{chpt2:theorem:main_solution}. The procedure is first presented for $n_y=1$ and subsequently generalized to arbitrary output dimension.

For systems with $n_y=1$, condition \eqref{chpt2:upperbound} can be verified by checking $\bar{\sigma}( \tilde{G}_d(j\omega,\Delta) ) \leq \bar{\sigma}( \bar{G}_d(j\omega))$ for all $\omega$
and $\Delta \in \mathbf{\Delta}$. More rigorously, it suffices to verify $\mu_{\mathbf{\Delta}} (\tilde{G}_d(j\omega,\Delta)) \leq \bar{\sigma}( \bar{G}_d(j\omega))$ for all $\omega$, where $\mu_{\mathbf{\Delta}}$ denotes the structured singular value associated with $\mathbf{\Delta}$. In case the inequality holds, $\bar{G}_d$ satisfies Lemma~\ref{chpt2:definition:untight}; if it is satisfied with equality, Lemma~\ref{chpt2:definition:tight} holds.

For systems with arbitrary $n_{y}$, condition \eqref{chpt2:upperbound} cannot, in general, be verified by directly comparing the singular values of $\bar{G}_d$ to $\tilde{G}_d(\Delta)$. Instead, it is required to first compute part of the solution, i.e., the co-outer factor of $\bar{G}_d$. Then, condition \eqref{chpt2:upperbound} is satisfied if $\bar{\sigma}( \bar{G}_{do}^{-1}(j\omega) \tilde{G}_d(j\omega,\Delta) ) \leq 1$ for all $\omega$ and all $\Delta \in \mathbf{\Delta}$. A comprehensive verification over all $\Delta \in \mathbf{\Delta}$ is obtained via the structured singular value. If $\mu_{\mathbf{\Delta}} (\bar{G}_{do}^{-1}(j\omega) \tilde{G}_d(j\omega,\Delta)) \leq 1$ holds for all $\omega$, with $\mu_{\mathbf{\Delta}}$ defined with respect to the structure $\mathbf{\Delta}$, then $\bar{G}_d$ satisfies Lemma~\ref{chpt2:definition:untight}. 

To satisfy Lemma~\ref{chpt2:definition:tight}, all singular values are relevant. Specifically, $\sup_{\Delta \in \mathbf{\Delta}} \{ \sigma_i( \bar{G}_{do}^{-1}(j\omega) \tilde{G}_d(j\omega,\Delta) ) \} = 1$ must hold for every $i = 1, \ldots, n_{\sigma}$, and all $\omega$. Consequently, the closer
$\sup_{\Delta \in \mathbf{\Delta}} \{\sigma_i( \bar{G}_{do}^{-1}(j\omega) \tilde{G}_d(j\omega,\Delta) ) \}$ is to unity from below for all $i$, the tighter the upper-bound envelope and the lower the associated conservatism. This interpretation is illustrated in the example of Section~\ref{chpt2:sec:FFR_robustFDFsyn}.

\subsection{Synthesis of upper-bound envelopes}
Having established verification conditions for candidate envelopes, the next step is to construct an upper-bound envelope $\bar{G}_d$ that satisfies Assumption~\ref{chpt2:assumption_4} or introduces as little conservatism as possible. This subsection presents two approaches for obtaining an initial realization $\bar{G}_{d,\mathrm{init}}$, which can subsequently be scaled to satisfy the conditions of Lemma~\ref{chpt2:definition:untight}. Consider the scaled $\bar{G}_{d,\mathrm{init}}$
\begin{equation}\label{chpt2:Gdbar_wc_alpha_beta}
    \bar{G}_d = 
    W_{o} \bar{G}_{d,\mathrm{init}} W_{i},
\end{equation}
where $W_o \in \mathcal{R}\mathcal{H}_\infty^{n_{y} \times n_{y}}$ and $W_i \in \mathcal{R}\mathcal{H}_\infty^{(n_{y} + n_{d}) \times (n_{y} + n_{d})}$ are transfer function matrices for scaling. Next, the two methods are described to obtain $\bar{G}_{d,\mathrm{init}}$. 
First, based on a single frequency that corresponds to the worst-case gain, and second, through interpolation over multiple frequencies using nonlinear optimization and boundary Nevanlinna-Pick interpolation.

\subsubsection{Upper bound based on worst-case frequency}
An initial $\bar{G}_{d,\mathrm{init}}(s)$ is derived from a worst-case gain analysis \cite{packardResultsWorstcasePerformance2000} as follows. Assume that this $\bar{G}_{d,\mathrm{init}}(\Delta)$ is parameterized by the LFT
\begin{equation*}
    \bar{G}_{d,\mathrm{init}}(\Delta) = \mathcal{F}_{u} \left( D (s), \tilde{\Delta}(s) \right),
\end{equation*}
with structured mixed uncertainty $\tilde{\Delta} \in \tilde{\mathbf{\Delta}}$ and $D(s)$ of compatible dimensions, where 
\begin{equation*}
    \tilde{\mathbf{\Delta}} : = \left\{ \mathrm{diag} (\tilde{\Delta}_{p},\tilde{\Delta}_{d}(s)), \tilde{\Delta}_{p} \in \tilde{\mathbf{\Delta}}_{p}, \tilde{\Delta}_{d} \in \tilde{\mathbf{\Delta}}_{d} \right\},
\end{equation*}
in which the subscripts $p$ and $d$ refer to parametric and dynamic respectively, and
\begin{equation*}
\begin{split}
    &\tilde{\mathbf{\Delta}}_{p} := \{ \mathrm{diag} ( \tilde{p}_{1} I_{n_{1}}, \ldots, \tilde{p}_{n_{r}} I_{n_{r}} ), \tilde{p}_{j} \in \mathbb{R}, \abs{\tilde{p}_{i}} \leq 1, j = 1, \ldots, n_{r} \},
\end{split}
\end{equation*}
\begin{equation*}
\begin{split}
    &\tilde{\mathbf{\Delta}}_{d} := \{ \mathrm{diag} ( \tilde{\Delta}_{1}(s), \ldots, \tilde{\Delta}_{n_{z}}(s) ), \norm{\tilde{\Delta}_{j}(s)}_{\infty} \leq 1, j = 1, \ldots, n_{z} \}.
\end{split}
\end{equation*}

\begin{definition}
    The worst-case gain of the uncertain system is
    \begin{equation}
        \bar{\Gamma} := \sup_{\tilde{\Delta}(s) \in \mathbf{\Delta}} \norm{ \mathcal{F}_{u} (D(s), \tilde{\Delta}(s)) }_{\infty}, \label{chpt2:eq:wcgainGam}
    \end{equation}
    if $\mathcal{F}_{u} (D(s), \tilde{\Delta}(s)))$ is robustly stable and $\bar{\Gamma} = \infty$ otherwise.
\end{definition}
This worst-case gain is equivalently calculated by computing the peak gain frequency by frequency and then taking the maximum. To formalize this, define the set of block structured complex uncertainties
\begin{equation*}
\begin{split}
    &\mathbf{Q}_{d} := \{ \mathrm{diag}(Q_{1}, \ldots, Q_{n_{Q}}), Q_{j} \in \mathbb{C}^{r_{j} \times c_{i}}, \bar{\sigma}(Q_{j}) = 1, \mathrm{rank}(Q_{j}) = 1, j = 1,\ldots, n_{Q} \}.
\end{split}
\end{equation*}
It is sufficient to restrict the complex matrices to be rank one \cite{packardComplexStructuredSingular1993}. The mixed set is defined as
\begin{equation*}
\begin{split}
    &\mathbf{Q} : = \left\{ \mathrm{diag} (\tilde{\Delta}_{p}, Q_{d}), \tilde{\Delta}_{p} \in \tilde{\mathbf{\Delta}}_{p}, Q_{d} \in \mathbf{Q}_{d} \right\},
\end{split}
\end{equation*}
Now consider the worst-case gain \eqref{chpt2:eq:wcgainGam}, equivalent to
    \begin{equation*}
        \bar{\Gamma} := \max_{\omega \in \mathbb{R} \cup \infty}\Gamma (\omega),
    \end{equation*}
where the peak gain at a specific $\omega$ is defined as
    \begin{equation*}
        \Gamma (\omega) := \sup_{Q \in \mathbf{Q}} \bar{\sigma}(\mathcal{F}_{u} (D(j \omega), Q)),
    \end{equation*}
and the frequency corresponding to the worst-case peak gain is
\begin{equation*}
    \omega_{wc} = \arg \max_{\omega \in \mathbb{R} \cup \infty}\Gamma (\omega).
\end{equation*}    
Lower bounds $L_{k}$ and upper bounds $U_{k}$ of $\Gamma (\omega)$ are computed for each frequency $\omega_{k}$ such that $L_{k} \leq \Gamma(\omega_{k}) \leq U_{k}$ \cite{hollandDevelopmentSkewLower2005,hollandDevelopmentSkewUpper2005}. Skewed-$\mu$ power iterations at a particular frequency $\omega_{k}$ then yield $Q_{k} \in \mathbf{Q}$ such that
\begin{equation*}
    \bar{\sigma} ( \mathcal{F}_{u} (D(j \omega_{k}), Q_{k} ) ) = L_{k}.
\end{equation*}
Hence, these power iterations allow to compute $\omega_{wc}$ and $Q_{k,wc}$ that correspond to $\bar{\Gamma}$. The parametric part found $\tilde{\Delta}_{p,wc}$ can be directly substituted in $\tilde{\Delta}_{wc} \in \tilde{\mathbf{\Delta}}$. The $\tilde{\Delta}_{d,wc}$ is found by construction of an LTI uncertainty that interpolates $Q_{d,wc}$ at $\omega_{wc}$ \cite[proof of Theorem 9.1]{Zhou1996a}. With this result, the worst-case gain realization at $\omega_{wc}$ is then constructed as $\bar{G}_{d,\mathrm{init}}(s,\Delta) = \mathcal{F}_{u} \left( D(s), \tilde{\Delta}_{wc}(s) \right)$, with $\tilde{\Delta}_{wc}(s) = \mathrm{diag} ( \tilde{\Delta}_{p,wc}, \tilde{\Delta}_{d,wc}(s))$.

The approach in this subsection yields a sample of the uncertain system that maximizes the gain at $\omega_{wc}$, the frequency where the peak gain occurs. This $\bar{G}_{d,\mathrm{init}}$ may have the largest gain over all frequencies, but is not necessarily high at other frequencies. 
In such case, $\bar{G}_{d,\mathrm{init}}$ can be tuned using $W_{o}$ and $W_{i}$ such that $\mu_{\mathbf{\Delta}} (\bar{G}_{do}^{-1}(j\omega) \tilde{G}_d(j\omega,\Delta)) \leq 1$. Next, an approach is presented that allows to obtain $\tilde{\Delta}_{d}$ by interpolation over multiple frequencies. 

\subsubsection{Upper bound based on multiple frequency interpolation}
Consider a collection of frequencies $\left\{ \omega_k \right\}_{k=1}^{n_{\omega}}$ with worst-case gain lower bounds $\left\{ L_k \right\}_{k=1}^{n_{\omega}}$, and cost $\tilde{J} : \mathbf{\Delta} \rightarrow \mathbb{R}$,
\begin{equation}
    \tilde{J}(\tilde{\Delta}(s)) := \sum_{k=1}^{n_{\omega}} \bar{\sigma} \left( \mathcal{F}_{u} \left( D(j \omega_k), \tilde{\Delta}(j \omega_k) \right) \right). \label{chpt2:eq:Jtilde}
\end{equation}
The construction problem over multiple frequencies is to find the structured $\tilde{\Delta}_{m}(s) \in \mathbf{\Delta}$, where $m$ refers to multiple, which maximizes $\tilde{J}(\tilde{\Delta}(s))$. 

The theoretical upper bound to $\tilde{J}(\tilde{\Delta}(s))$ is $\tilde{J}_{u} := \sum_{k=1}^{n_{\omega}} L_k$. If a system only has dynamic uncertainties, it is possible to find $\tilde{\Delta}_{m}(s)$ such that $\tilde{J}(\tilde{\Delta}_{m}(s)) = \tilde{J}_{u}$. When mixed uncertainties are present, the theoretical upper bound is generally unattainable because parametric uncertainties couple frequencies. Specifically, different values of the same parametric uncertainty might be needed to achieve the lower bound $L_{k}$ at each frequency.

To find the realization $\tilde{\Delta}_{m}(s) \in \tilde{\mathbf{\Delta}}$, nonlinear optimization is used to find $\tilde{\Delta}_{p,m} \in \tilde{\mathbf{\Delta}}_{p}$ and the complex matrices $\{ Q_{d,k} \}_{k=1}^{n_{\omega}}$. Subsequently, the interpolant $\tilde{\Delta}_{d,m}(s) \in \tilde{\mathbf{\Delta}}_{d}$ is computed using the boundary Nevanlinna Pick method \cite[Example 21.3.1 and Corollary 21.4.2]{ballInterpolationRationalMatrix1990}. The reader is referred to \cite[Algorithm 2]{patarticsWorstCaseUncertainty2023} for details regarding the nonlinear optimization using the interior-point algorithm and application of the boundary Nevanlinna Pick method.

\begin{remark}
    Be aware that the construction of the perturbed models $\bar{G}_{d,\mathrm{init}}$ focuses on the largest singular value. Still, $\bar{G}_{d}$ must be weighted using $W_{o}$ and $W_{i}$ such that $\mu_{\mathbf{\Delta}} (\bar{G}_{do}^{-1}(j\omega) \tilde{G}_d(j\omega,\Delta)) \leq 1$. This amounts to checking and possibly weighting the other principal directions as well.
\end{remark}

\begin{remark}
    A conservative upper-bound realization $\bar{G}_d$ can always be constructed based solely on the first singular value $\bar{\sigma}(\tilde{G}_d(\Delta))$. Here, a weight $W$ is created such that $\bar{\sigma}(\tilde{G}_d(\Delta)) \leq W$. Subsequently $\bar{G}_d$ is constructed as $\bar{G}_d = W I$ which results in a diagonal post-filter $R$ with equal singular values.
\end{remark}

\section{Example}
\label{sec:Example}
In this section, the proposed approach is validated using an uncertain model of an experimental reticle stage, see Figure \ref{chpt2:fig:FFR} and \ref{chpt2:fig:FFR_Closeup}. The reticle stage is actuated by 8 force actuators in the vertical direction and the position is measured with 8 capacitive sensors with nanometer accuracy, of which a two-input two-output uncertain equivalent plant is considered to illustrate the approach. Subsequently, a robust fault detection filter is designed in which the upper bound $\bar{G}_{d}$ is examined in detail, as well as its influence on the closed-loop transfer functions to the residuals. Finally, the time-domain response is computed and discussed.

\subsection{Uncertain system model and closed-loop configuration}
Consider the nominal model of order 14 which is expanded with an uncertain channel in Figure \ref{chpt2:fig:FFR_Gu}. The generalized plant $P$ is partitioned as
\begin{equation}
    \begin{bmatrix}
    \begin{array}{c}
         z_\Delta\\
         \hline
        \tilde{y}_{1}\\
        \tilde{y}_{2}
    \end{array}
    \end{bmatrix}
    =
    \underbrace{
    \begin{bmatrix}
    \begin{array}{c|cc}
         G_{z_\Delta w_\Delta} & G_{z_\Delta u_{1}} & G_{z_\Delta u_{2}}\\
         \hline
        G_{y_{1} w_\Delta} & G_{11} & G_{12}\\
        G_{y_{2} w_\Delta} & G_{21} & G_{22}
    \end{array}
    \end{bmatrix}}_P
    \begin{bmatrix}
    \begin{array}{c}
        w_\Delta\\
        \hline
        u_{1}\\
        u_{2} 
    \end{array}
    \end{bmatrix},
\end{equation}
with uncertainty channel $w_\Delta = \Delta z_\Delta$ and dynamic norm-bounded uncertainty $\Delta \in \mathbf{\Delta}$. The uncertain plant is constructed as $G_u = \mathcal{F}_u(P,\Delta)$. The disturbance model $G_{d}$ and the flat-spectrum fault model $G_{f}$ are shown in Figure \ref{chpt2:fig:FFR_GdGf}.

The system operates in closed loop, see Figure \ref{chpt2:fig:FD_configuration_uncertain}, and is robustly stabilized by controller with a target bandwidth of 140 Hz that is synthesized with a one-step robust control approach \cite{Tacx2022}.

\vspace{-2mm}
\begin{figure}[b]
\centering
    \begin{subfigure}[b]{0.45\textwidth}
\centering
\setlength{\fboxsep}{-1pt}%
\setlength{\fboxrule}{1pt}%
\fbox{\includegraphics[width=200pt]{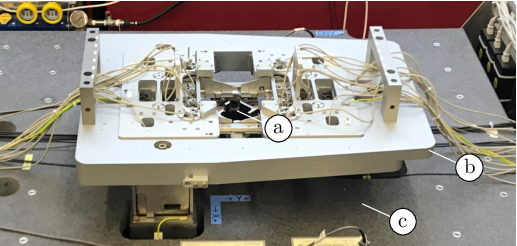}}
\caption{Overview, where \encircle{a} indicates the moving stage (the reticle) and \encircle{b} indicates the force frame. The vibration isolation system is denoted by \encircle{c}.}
\label{chpt2:fig:FFR}
    \end{subfigure}
    \hspace{5mm}
    \begin{subfigure}[b]{0.45\textwidth}
\centering
\setlength{\fboxsep}{-1pt}%
\setlength{\fboxrule}{1pt}%
\fbox{\includegraphics[width=200pt]{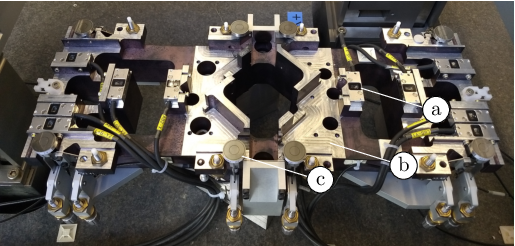}}
\caption{Close up, where \encircle{a} is a horizontal position sensor, \encircle{b} is the metrology frame, and \encircle{c} is a capacitive vertical position sensor.}
\label{chpt2:fig:FFR_Closeup}
    \end{subfigure}
    \caption{Overview and close up of the reticle stage.}
\end{figure}

\vspace{-3mm}
\begin{figure}[b!]
    \centering
    \includegraphics[scale=0.845]{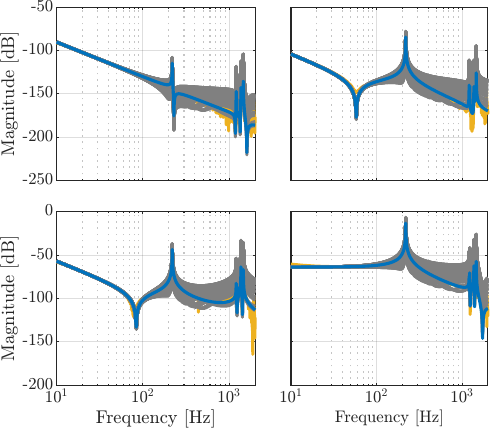}
    \caption{Bode magnitude plot of the reticle stage. The measured frequency response function is depicted in $\tikzline{MatlabYellow}$ together with the nominal 14th order plant model $G$ in $\tikzline{MatlabBlue}$ and the uncertain state space model $G_u$ in $\tikzline{MatlabGray50}$ visualized for 50 different uncertainties $\Delta \in \mathbf{\Delta}$.}
    \label{chpt2:fig:FFR_Gu}
\end{figure}

\newpage \subsection{Robust fault detection filter synthesis}\label{chpt2:sec:FFR_robustFDFsyn}
Next, the introduced models are used to synthesize a robust fault detection filter using the solution proposed in Section \ref{chpt2:sec:Solution}. First, a stable LCF, $\Tilde{M}_u$, $ \tilde{N}_u$, of the nominal system $G_u(0)$ is computed. Given these factors, the uncertain plant $G_u(\Delta)$, disturbance and fault models $G_{d}$ and $G_{f}$, respectively, and the robustly stabilizing feedback controller $C$, the closed-loop transfers from exogenous inputs to the pre-residual, $[
    r \:\: d \:\: f ] \rightarrow \Tilde{\epsilon}$, are computed, which allows to determine $\tilde{G}_d^{2\times4}(\Delta)$, see \eqref{chpt2:epsilon_to_rdf}. 

To design the upper-bound envelope $\Bar{G}_d$, frequencies $\left\{ \omega_k \right\}_{k=1}^{n_{\omega}}$ are selected: $\omega_{1} = 1$ Hz, $\omega_{2} = \omega_{wc} = 219$ Hz, $\omega_{3} = 1202$ Hz, and $\omega_{4} = 1368$ Hz. Maximizing $\tilde{J}$, see \eqref{chpt2:eq:Jtilde}, yields $\Delta(j \omega_{k})$, and interpolation via BNP gives $\Delta_{np}$. The perturbed model maximizing $\tilde{J}$ is $\bar{G}_{d,\mathrm{init}} = \mathcal{F}_{u} \left( D, \Delta_{np} \right)$. Figure \ref{chpt2:fig:FFR_svd_Gdbar} shows the singular values of $\tilde{G}_d(\Delta)$, $\mathcal{F}_{u} \left( D(s), \Delta_{wc} \right)$, and $\bar{G}_{d,\mathrm{init}}$.

The upper-bound $\bar{G}_{d}$ is iteratively updated via $W_o$ and $W_i$, see Figure \ref{chpt2:fig:FFR_Walpha_beta}, to satisfy Lemma \ref{chpt2:definition:untight}. The singular values of $\bar{G}_{d}(s)$ are shown in Figure \ref{chpt2:fig:FFR_svd_Gdbar}. Using \cite[Theorem 13.37]{Zhou1996a} the co-outer factor is computed, and the optimal post-filter $R$ is obtained via Theorem \ref{chpt2:theorem:main_solution} with $\gamma = 1$. The condition $\mu_{\mathbf{\Delta}} (\bar{G}_{do}^{-1}(j\omega) \tilde{G}_d(j\omega,\Delta)) \leq 1$, see Figure \ref{chpt2:fig:SSV}, and singular values  $\sigma_i( T_{\varepsilon \tilde{d}} ) \leq 1$, see Figure \ref{chpt2:fig:FFR_svd_transfers} guide local adjustments to $W_{o}$ and $W_{i}$. The final post-filter $R$ is shown in Figure \ref{chpt2:fig:FFR_Q}.

\begin{figure}[t]
    \centering
    \includegraphics[scale=0.845]{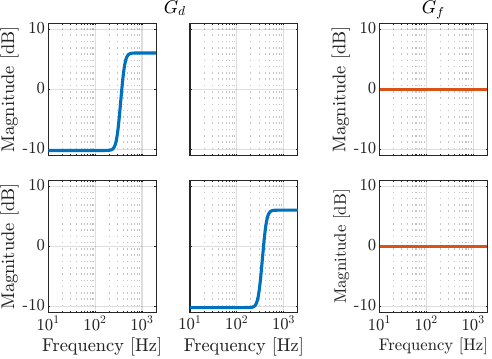}
    \caption{Disturbance model $G_d$ \tikzline{MatlabBlue} and fault model $G_f$ \tikzline{MatlabRed}.}
    \label{chpt2:fig:FFR_GdGf}
\end{figure}

\begin{figure}[t]
    \centering
    \includegraphics[scale=0.845]{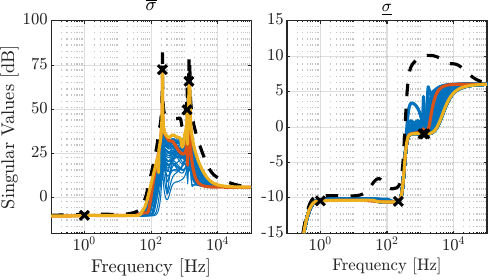}
    \caption{Singular value plots of uncertain transfer function matrix $\tilde{G}_d(s,\Delta)$ in $\tikzline{MatlabBlue}$ together with the singular values of worst-case gain upper bound $\mathcal{F}_{u} \left( D(s), \Delta_{wc}(s) \right)$ in $\tikzline{MatlabRed}$, as well as the multiple frequency interpolant via the BNP approach $\bar{G}_{d,\mathrm{init}}(s) = \mathcal{F}_{u} \left( D(s), \Delta_{np}(s) \right)$ $\tikzline{MatlabYellow}$. The frequencies over which is interpolated are indicated $\tikzcross{Black}$. The singular values of the weighted upper bound are indicated in \tikzdashedline{black}.}
    \label{chpt2:fig:FFR_svd_Gdbar}
\end{figure}

The singular values in Figure \ref{chpt2:fig:FFR_svd_Gdbar} show that the BNP interpolant provides a good initial model, slightly inflated to satisfy Lemma \ref{chpt2:definition:untight}. Figure \ref{chpt2:fig:FFR_svd_transfers} shows $\bar{\sigma}(T_{\varepsilon f})$ and the optimal fault sensitivity for $\Delta = 0$, demonstrating limited degradation due to model uncertainty. The optimal post filter for $\Delta = 0$, shown in \ref{chpt2:fig:FFR_Q}, matches the inverse of the disturbance model. Comparing $\bar{G}_{d}(\Delta)$ in Figure \ref{chpt2:fig:FFR_svd_Gdbar} highlights the impact of modeling uncertainty. Consequently, the singular values of $T_{\epsilon \Tilde{d}} : [ r \:\: d ] \rightarrow \epsilon$ 
are mapped below $\gamma=1$, satisfying Theorem \ref{chpt2:theorem:main_solution}, while achieving amplification of $T_{\epsilon f} : f \rightarrow \epsilon$. 

\newpage

\vspace{-7mm}
\begin{figure}[t!]
    \centering
    \includegraphics[scale=0.81]{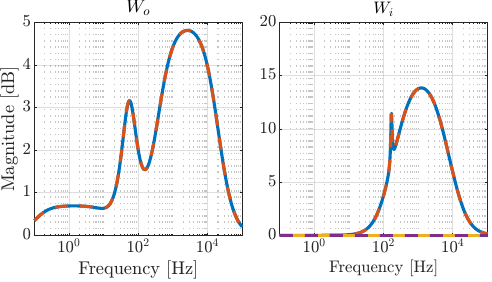}
    \vspace{-3mm}
    \caption{Bode magnitude plots of the iteratively tuned filters $W_{o}(s)$ and $W_{i}(s)$. The first and second entry of the block diagonal $W_{o}$ are depicted as \tikzline{MatlabBlue} and \tikzdashedline{MatlabRed}. The first, second, third, and fourth entry of these $W_i$ are depicted as \tikzline{MatlabBlue}, \tikzdashedline{MatlabRed}, \tikzline{MatlabYellow}, and \tikzdashedline{MatlabPurple}, respectively.}
    \label{chpt2:fig:FFR_Walpha_beta}
\end{figure}
\vspace{-6mm}
\begin{figure}[t!]
    \centering
    \includegraphics[scale=0.81]{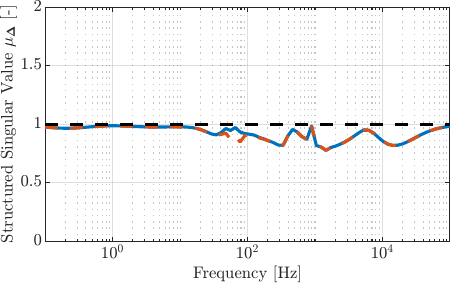}
    \vspace{-3mm}
    \caption{Structured singular value $\mu_{\mathbf{\Delta}} (\bar{G}_{do}^{-1}(j\omega) \tilde{G}_d(j\omega,\Delta))$. Since the lower bound \tikzline{MatlabBlue} and  the upper bound \tikzline{MatlabRed} of the structured singular value is below $\gamma = 1$ the user-defined maximum disturbance sensitivity is satisfied.}
    \label{chpt2:fig:SSV}
\end{figure}
\vspace{-6mm}
\begin{figure}[t!]
    \centering
    \includegraphics[scale=0.81]{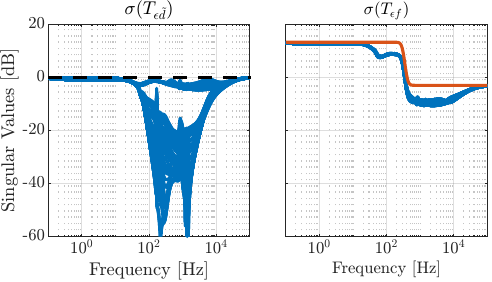}
    \vspace{-3mm}
    \caption{Singular values of transfer $T_{\epsilon \Tilde{d}} : \begin{bmatrix}
        r & d
    \end{bmatrix} \rightarrow \epsilon$ in $\tikzline{MatlabBlue}$ together with bound $\gamma = 1$ in $\tikzdashedline{MatlabBlack}$ (left) and the singular values of transfer $T_{\epsilon f} : f \rightarrow \epsilon$ (right) for 50 different uncertainties $\Delta \in \mathbf{\Delta}$. On top of that, the singular values of the maximum achievable transfer $T_{\epsilon f}$ is plotted in $\tikzline{MatlabRed}$ when the optimal post-filter is used in the case of no model uncertainty. Hence, performance degradation due to model uncertainty is limited.}
    \label{chpt2:fig:FFR_svd_transfers}
\end{figure}
\vspace{-6mm}
\begin{figure}[t!]
    \centering
    \includegraphics[scale=0.81]{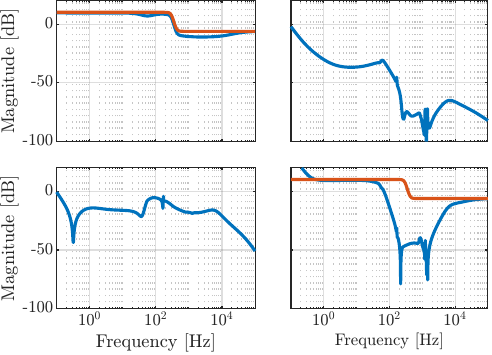}
    \vspace{-3mm}
    \caption{Bode magnitude plot of the post-filter $R$ in $\tikzline{MatlabBlue}$ together with the optimal post-filter if no model uncertainty is present in $\tikzline{MatlabRed}$. Observe the shape of the post-filter with respect to the singular values of $\bar{G}_{d}$ in Figure \ref{chpt2:fig:FFR_svd_Gdbar}.}
    \label{chpt2:fig:FFR_Q}
\end{figure}
\vspace{-6mm}

\newpage \subsection{Time-domain response} 
A time domain simulation demonstrates fault detection using $R$ from Figure \ref{chpt2:fig:FFR_Q}. The disturbance is modeled as independent normally distributed white noise ($\sigma = 0.1$), and a setpoint $r$ (block-signal, amplitude 1, $f_r = 50$ Hz) is applied at $T=0.05$ s. A fault with amplitude 1 occurs at $T=0.1s$. It consists of two block-shaped components and a $200$ Hz component, a frequency range characterized by significant model uncertainty. Residual $\epsilon_1$ and $\epsilon_2$ for different $\Delta \in \mathbf{\Delta}$ are shown in Figure \ref{chpt2:fig:FFR_epsilon} alongside $r$ and $f$. 

Disturbances and setpoints are suppressed, and faults are distinct in the residuals. Although deriving time-domain bounds is outside this scope, a conservative detection threshold of $\abs{\beta} = 1$, based on $t \in [0.05,0.1]$ s, enables fault detection despite disturbances and modeling uncertainty.

\begin{figure}[t]
    \centering
    \includegraphics[scale=0.845]{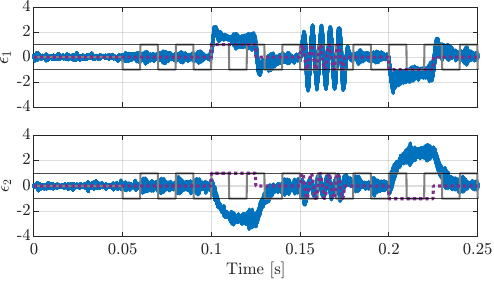}
    \caption{Time response of both residual signals $\epsilon_1$ and $\epsilon_2$ in $\tikzline{MatlabBlue}$ for 50 different uncertainties $\Delta \in \mathbf{\Delta}$ together with the reference input $r$ in $\tikzline{MatlabGray50}$ and fault input $f$ in $\tikzdottedline{MatlabPurple}$. A detection threshold $\abs{\beta} = 1$ is plotted in $\tikzline{MatlabBlack}$. Indeed, the residuals exceed the user-defined $\beta$-bounds when a fault is present in the system.}
    \label{chpt2:fig:FFR_epsilon}
\end{figure}

\section{Conclusion}
In this article, an optimal solution is derived which solves the robust fault detection filter design problem for continuous-time LTI uncertain systems operating in an open or closed-loop setting. The solution is obtained by solving a single Riccati equation, which maximizes performance across all frequencies, and is completely determined by an upper-bound envelope which bounds the uncertainty and disturbance models. By achieving an optimal compromise between sensitivity to faults and the rejection of disturbances and modeling uncertainty inherent in any practical systems, the proposed method effectively addresses a key challenge in fault-diagnosis system design.

    \section*{CRediT authorship contribution statement}
    \textbf{Koen Classens:} Conceptualization, Methodology, Software, Validation, Investigation, Visualization, Writing - Original Draft, Writing - Review \& Editing, \textbf{Tjeerd Ickenroth:} Conceptualization, Methodology, Software, Validation, Investigation, Writing - Review \& Editing, \textbf{Jeroen van de Wijdeven:} Conceptualization, Writing - Review \& Editing, Supervision, 
    \textbf{Tom Oomen:} Conceptualization, Writing - Review \& Editing, Supervision, Funding acquisition

    \section*{Declaration of competing interest}
    The authors declare that they have no known competing financial interests or personal relationships that could have appeared to influence the work reported in this paper.
    
    \section*{Acknowledgements}
    This work is supported by ASML, Veldhoven, the Netherlands. In addition, this work is funded by Topconsortia voor Kennis en Innovatie (TKI). Paul Tacx is gratefully acknowledged for providing the model of the experimental reticle stage.
    
    \bibliographystyle{elsarticle-num} 
    \bibliography{refs}

@Book{balasRobustControlToolbox2024,
  author = {Balas, Gary and Chiang, Richard and Packard, Andy and Safonov, Michael},
  title  = {Robust control toolbox user's guide},
  year   = {2024},
}

@Book{ballInterpolationRationalMatrix1990,
  author    = {Ball, J. A. and Gohberg, I. and Rodman, L.},
  title     = {Interpolation of {{Rational Matrix Functions}}},
  doi       = {10.1007/978-3-0348-7709-1},
  location  = {Basel},
  publisher = {Birkhäuser Basel},
  year      = {1990},
}

@Article{braatzComputationalComplexitySpl1994,
  author       = {Braatz, R.P. and Young, P.M. and Doyle, J.C. and Morari, M.},
  journal = {IEEE Trans. Autom. Control},
  title        = {Computational Complexity of $\mu$ Calculation},
  doi          = {10.1109/9.284879},
  issn         = {1558-2523},
  number       = {5},
  pages        = {1000--1002},
  urldate      = {2024-05-06},
  volume       = {39},
  eventtitle   = {{{IEEE Transactions}} on {{Automatic Control}}},
  year         = {1994},
}

@Book{chenRobustModelBasedFault1999,
  author    = {Chen, J. and Patton, R. J.},
  title     = {Robust {{Model-Based Fault Diagnosis}} for {{Dynamic Systems}}},
  doi       = {10.1016/s0005-1098(01)00290-4},
  isbn      = {978-1-4613-7344-5},
  location  = {{Boston, MA}},
  pages     = {1089--1091},
  publisher = {{Springer US}},
  booktitle = {Automatica},
  issn      = {00051098},
  year      = {1999},
}

@InProceedings{Classens2021,
  author    = {Classens, K. and Heemels, W. P. M. H. and Oomen, T.},
  booktitle = {2021 IEEE American Control Conference (ACC)},
  title     = {{Closed-loop Aspects in MIMO Fault Diagnosis with Application to Precision Mechatronics}},
  doi       = {10.23919/ACC50511.2021.9482785},
  isbn      = {9781665441971},
  location  = {New Orleans, LA, USA},
  pages     = {1756--1761},
  issn      = {07431619},
  year      = {2021},
}

@InProceedings{Classens2021DigitalTwin,
  author    = {Classens, K. and Heemels, W. P. M. H. and Oomen, T.},
  booktitle = {2021 IEEE 1st International Conference on Digital Twins and Parallel Intelligence (DTPI)},
  title     = {{Digital Twins in Mechatronics: From Model-based Control to Predictive Maintenance}},
  doi       = {10.1109/DTPI52967.2021.9540144},
  isbn      = {9781665433372},
  location  = {Beijing, China},
  pages     = {336--339},
  year      = {2021},
}

@Article{classens2023Opportunities,
  author    = {Classens, K. and van de Wijdeven, J. and Heemels, W. P. M. H. and Oomen, T.},
  title     = {Opportunities of Digital Twins for High-tech Systems: From Fault Diagnosis and Predictive Maintenance to Control Reconfiguration},
  number    = {5},
  pages     = {5--12},
  volume    = {63},
  journal   = {Mikroniek},
  publisher = {DSPE},
  year      = {2023},
}

@thesis{classensFaultDiagnosisUncertain2024,
  type = {Phd Thesis 1 (Research TU/e / Graduation TU/e)},
  title = {Fault {{Diagnosis}} for {{Uncertain Systems}} in {{Closed Loop}}: {{Applied}} to {{Semiconductor Equipment}}},
  author = {Classens, Koen},
  year = {2024},
  institution = {Eindhoven University of Technology},
  location = {Eindhoven, The Netherlands},
  langid = {english}
}

@Book{dingModelbasedFaultDiagnosis2008,
  author     = {Ding, Steven X.},
  date       = {2008},
  title      = {Model-Based Fault Diagnosis Techniques: Design Schemes, Algorithms, and Tools},
  doi        = {10.1007/978-3-540-76304-8},
  isbn       = {9783540763031},
  location   = {Berlin},
  pages      = {1--473},
  pagetotal  = {473},
  publisher  = {Springer Nature},
  booktitle  = {Model-based Fault Diagnosis Techniques: Design Schemes, Algorithms, and Tools},
  shorttitle = {Model-Based Fault Diagnosis Techniques},
  year       = {2008},
}

@Book{dingDatadrivenDesignFault2014,
  author    = {Ding, S. X.},
  title     = {Data-Driven {{Design}} of {{Fault Diagnosis}} and {{Fault-tolerant Control Systems}}},
  doi       = {10.1007/978-1-4471-6410-4},
  location  = {{London}},
  publisher = {{Springer London}},
  year      = {2014},
}

@Article{Ding2000,
  author   = {Ding, S. X. and Jeinsch, T. and Frank, P. M. and Ding, E. L.},
  title    = {{A unified approach to the optimization of fault detection systems}},
  pages    = {725--745},
  doi      = {10.1002/1099-1115(200011)14:7<725::AID-ACS618>3.0.CO;2-Q},
  journal  = {Int. J. Adapt. Control Signal Process.},
  year     = {2000},
}

@Article{dingFaultDetectionFactorization1990,
  author       = {Ding, Xianchun and Frank, Paul M.},
  date         = {1990-06},
  journal = {Syst. Control Lett.},
  title        = {Fault Detection via Factorization Approach},
  doi          = {10.1016/0167-6911(90)90094-B},
  issn         = {01676911},
  number       = {5},
  pages        = {431--436},
  volume       = {14},
  journal      = {Systems {\&} Control letters},
  langid       = {english},
  year         = {1990},
}

@InProceedings{wei2025Application,
  author       = {Wei, Pengyu and Wu, Shihao and Gao, Dianchun and Hao, Zhongyang and Song, Fazhi},
  booktitle    = {2025 44th Chinese Control Conference (CCC)},
  title        = {Application of a Robust Fault Detection Method for Processing Modeling Uncertainty in LPV system},
  doi          = {10.23919/CCC64809.2025.11178816},
  organization = {IEEE},
  pages        = {5285--5290},
  year         = {2025},
}

@Article{frank1997survey,
  author  = {Frank, P. M. and Ding, X.},
  title   = {{Survey of robust residual generation and evaluation methods in observer-based fault detection systems}},
  doi     = {10.1016/S0301-4770(08)60756-3},
  issn    = {03014770},
  pages   = {403--424},
  volume  = {7},
  journal = {J. Process Control},
  year    = {1997},
}

@InProceedings{venkataraman2016robust,
  author     = {Venkataraman, R. and Seiler, P.},
  booktitle  = {2016 {{American Control Conference}} ({{ACC}})},
  doi          = {10.1109/ACC.2016.7526079},
  title      = {Robust LPV estimator synthesis using integral quadratic constraints},
  pages      = {4611-4616},
  year       = {2016},
}

@Book{gertlerFaultDetectionDiagnosis1998,
  author   = {Gertler, J.},
  title    = {Fault {{Detection}} and {{Diagnosis}} in {{Engineering Systems}}},
  location = {{Boca Raton}},
  doi      = {10.1201/9780203756126},
  year     = {1998},
}

@Article{Glover2011,
  author  = {Glover, Keith and Varga, Andras},
  title   = {{On solving non-standard $\mathcal{H}_-/\mathcal{H}_{2/\infty}$ fault detection problems}},
  doi     = {10.1109/CDC.2011.6160723},
  issn    = {25762370},
  number  = {2},
  pages   = {891--896},
  isbn    = {9781612848006},
  journal = {Proceedings of the IEEE Conference on Decision and Control},
  year    = {2011},
}

@Article{henryTheoriesDesignAnalysis2021,
  author       = {Henry, D.},
  journal = {J. Franklin Inst.},
  title        = {Theories for Design and Analysis of Robust  {{$H_\infty$}}/{{$H_-$}}  Fault Detectors},
  doi          = {10.1016/j.jfranklin.2020.11.006},
  issn         = {00160032},
  number       = {1},
  pages        = {1152--1183},
  urldate      = {2021-03-04},
  volume       = {358},
  year         = {2021},
}

@Article{hollandDevelopmentSkewLower2005,
  author       = {Holland, R. and Young, P. and Zhu, C.},
  journal = {Int. J. Robust Nonlinear Control},
  title        = {Development of a Skew $\mu$ Lower Bound},
  doi          = {10.1002/rnc.1003},
  issn         = {1099-1239},
  number       = {11},
  pages        = {495--506},
  urldate      = {2024-05-06},
  volume       = {15},
  langid       = {english},
  year         = {2005},
}

@Article{hollandDevelopmentSkewUpper2005,
  author       = {Holland, R. and Young, P. and Zhu, C.},
  journal = {Int. J. Robust Nonlinear Control},
  title        = {Development of a Skew $\mu$ Upper Bound},
  doi          = {10.1002/rnc.1028},
  issn         = {1099-1239},
  number       = {18},
  pages        = {905--921},
  urldate      = {2024-05-06},
  volume       = {15},
  langid       = {english},
  year         = {2005},
}

@InProceedings{houLMIApproachHinfty1996,
  author    = {Hou, M. and Patton, R. J.},
  booktitle = {UKACC International Conference on Control},
  title     = {An {{LMI}} Approach to {{$H_-$}}/{{$H_{\infty}$}} Fault Detection Observers},
  doi       = {10.1049/cp:19960570},
  pages     = {305-310},
  issn      = {0537-9989},
  year      = {1996},
}

@Book{isermannFaultdiagnosisSystemsIntroduction2006,
  author     = {Isermann, R.},
  title      = {Fault-Diagnosis Systems: An Introduction from Fault Detection to Fault Tolerance},
  isbn       = {978-3-540-24112-6},
  location   = {{Berlin; New York}},
  pagetotal  = {475},
  publisher  = {{Springer}},
  doi        = {10.1007/3-540-30368-5},
  langid     = {english},
  shorttitle = {Fault-Diagnosis Systems},
  year       = {2006},
}

@Article{jaimoukhaMatrixFactorizationSolution2006,
  author       = {Jaimoukha, I. M. and Li, Z. and Papakos, V.},
  journal = {Automatica},
  title        = {A Matrix Factorization Solution to the {{$H_{-}$}}/{{$H_{\infty}$}} Fault Detection Problem},
  doi          = {10.1016/j.automatica.2006.06.002},
  number       = {11},
  volume       = {42},
  year         = {2006},
}

@Article{Li2008,
  author  = {Li, Jingjing and Zhou, Kemin and Ren, Zhang},
  title   = {{Robust fault diagnosis for linear time invariant uncertain systems}},
  doi     = {10.1109/WCICA.2008.4593089},
  pages   = {1170--1173},
  isbn    = {9781424421145},
  journal = {Proceedings of the World Congress on Intelligent Control and Automation (WCICA)},
  year    = {2008},
}

@Article{liuLMIApproachMinimum2005,
  author       = {Liu, J. and Wang, J. L. and Yang, G.},
  journal = {Automatica},
  title        = {An {{LMI}} Approach to Minimum Sensitivity Analysis with Application to Fault Detection},
  doi          = {10.1016/j.automatica.2005.06.005},
  issn         = {00051098},
  number       = {11},
  pages        = {1995--2004},
  urldate      = {2024-05-06},
  volume       = {41},
  langid       = {english},
  year         = {2005},
}

@InProceedings{liuOptimalSolutionsMultiobjective2007,
  author     = {Liu, Nike and Zhou, Kemin},
  booktitle  = {2007 46th {{IEEE Conference}} on {{Decision}} and {{Control}}},
  date       = {2007-12},
  title      = {Optimal Solutions to Multi-Objective Robust Fault Detection Problems},
  doi        = {10.1109/CDC.2007.4434123},
  eventtitle = {2007 46th {{IEEE Conference}} on {{Decision}} and {{Control}}},
  isbn       = {1424414989},
  pages      = {981--988},
  issn       = {25762370},
  journal    = {Proceedings of the IEEE Conference on Decision and Control},
  year       = {2007},
}

@Article{marzatModelbasedFaultDiagnosis2012,
  author     = {Marzat, J. and Piet-Lahanier, H. and Damongeot, F. and Walter, E.},
  title      = {Model-Based Fault Diagnosis for Aerospace Systems: A Survey},
  doi        = {10.1177/0954410011421717},
  issn       = {0954-4100, 2041-3025},
  number     = {10},
  pages      = {1329--1360},
  volume     = {226},
  journal    = {Proc. Inst. Mech. Eng., Part G: J. Aerosp. Eng.},
  langid     = {english},
  shorttitle = {Model-Based Fault Diagnosis for Aerospace Systems},
  year       = {2012},
}

@Article{newlinGeneralizationStructuredSingular1998,
  author       = {Newlin, M. P. and Smith, R. S.},
  journal = {IEEE Trans. Autom. Contro},
  title        = {A Generalization of the Structured Singular Value and Its Application to Model Validation},
  doi          = {10.1109/9.701088},
  issn         = {1558-2523},
  number       = {7},
  pages        = {901--907},
  volume       = {43},
  eventtitle   = {{{IEEE Transactions}} on {{Automatic Control}}},
  year         = {1998},
}

@InProceedings{packardResultsWorstcasePerformance2000,
  author     = {Packard, A. and Balas, G. and Liu, R. and Shin, J.},
  booktitle  = {Proceedings of the 2000 {{American Control Conference}}. {{ACC}}},
  title      = {Results on Worst-Case Performance Assessment},
  doi        = {10.1109/ACC.2000.878616},
  eventtitle = {Proceedings of the 2000 {{American Control Conference}}. {{ACC}}},
  pages      = {2425-2427},
  urldate    = {2024-05-06},
  issn       = {0743-1619},
  year       = {2000},
}

@Article{packardComplexStructuredSingular1993,
  author       = {Packard, A. and Doyle, J.},
  journal = {Automatica},
  title        = {The Complex Structured Singular Value},
  doi          = {10.1016/0005-1098(93)90175-S},
  issn         = {0005-1098},
  number       = {1},
  pages        = {71--109},
  urldate      = {2022-01-14},
  volume       = {29},
  langid       = {english},
  year         = {1993},
}

@Article{patarticsWorstCaseUncertainty2023,
  author       = {Patartics, B. and Seiler, P. and Takarics, B. and Vanek, B.},
  journal = {IEEE Trans. Contr. Syst. Technol.},
  title        = {Worst {{Case Uncertainty Construction}} via {{Multifrequency Gain Maximization With Application}} to {{Flutter Control}}},
  doi          = {10.1109/TCST.2022.3173044},
  issn         = {1063-6536, 1558-0865, 2374-0159},
  number       = {1},
  pages        = {155--165},
  urldate      = {2024-04-26},
  volume       = {31},
  langid       = {english},
  year         = {2023},
}

@Article{ho2019robust,
  author       = {Ho, L. M.},
  journal = {IEEE Trans. Contr. Syst. Technol.},
  title        = {Robust Residual Generator Synthesis for Uncertain LPV Systems Applied to Lateral Vehicle Dynamics},
  number       = {3},
  doi        = {10.1109/TCST.2018.2789461},
  pages        = {1275-1283},
  volume       = {27},
  year         = {2019},
}

@InProceedings{patarticsConstructionUncertaintyMaximize2020,
  author     = {Patartics, B. and Seiler, P. and Vanek, B.},
  booktitle  = {2020 {{American Control Conference}} ({{ACC}})},
  title      = {Construction of an {{Uncertainty}} to {{Maximize}} the {{Gain}} at {{Multiple Frequencies}}},
  doi        = {10.23919/ACC45564.2020.9147542},
  eventtitle = {2020 {{American Control Conference}} ({{ACC}})},
  isbn       = {978-1-5386-8266-1},
  location   = {Denver, CO, USA},
  pages      = {2643--2648},
  urldate    = {2024-04-26},
  langid     = {english},
  year       = {2020},
}

@InProceedings{roosSystemsModelingAnalysis2013,
  author     = {Roos, Clement},
  booktitle  = {2013 {{IEEE Conference}} on {{Computer Aided Control System Design}} ({{CACSD}})},
  title      = {Systems Modeling, Analysis and Control ({{SMAC}}) Toolbox: {{An}} Insight into the Robustness Analysis Library},
  isbn       = {978-1-4799-1565-1},
  doi = {10.1109/CACSD.2013.6663479},
  location   = {Hyderabad, India},
  pages      = {176--181},
  publisher  = {IEEE},
  langid     = {english},
  shorttitle = {Systems Modeling, Analysis and Control ({{SMAC}}) Toolbox},
  year       = {2013},
}

@InProceedings{sadrniaRobustInftyMu1997,
  author     = {Sadrnia, M. A. and Patton, R. J. and Chen, J.},
  booktitle  = {1997 {{European Control Conference}} ({{ECC}})},
  title      = {Robust {{$H_-$}}/$\mu$ Fault Diagnosis Observer Design},
  doi        = {10.23919/ECC.1997.7082314},
  eventtitle = {1997 {{European Control Conference}} ({{ECC}})},
  pages      = {1502--1507},
  year       = {1997},
}

@Article{huang2026towards,
  author  = {Huang, Jiacong and Song, Fazhi and Gao, Dianchun and Tan, Jiubin},
  title   = {Towards predictive maintenance of lithography systems: Robust fault diagnosis via {LPV}-to-{LTI} reformulation},
  doi     = {10.1016/j.rineng.2026.109172},
  pages   = {109172},
  volume  = {29},
  journal = {Results Eng.},
  year    = {2026},
}

@Book{skogestadMultivariableFeedbackControl2005,
  author    = {Skogestad, Sigurd and Postlethwaite, Ian},
  title     = {Multivariable {{Feedback Control Analysis}} and {{Design}}},
  publisher = {{John Wiley \& Sons}},
  year      = {2005},
}

@Article{stoustrupFaultEstimationStandard2002,
  author       = {Stoustrup, J. and Niemann, H. H.},
  journal = {Int. J. Robust Nonlinear Control},
  title        = {Fault Estimation - a Standard Problem Approach},
  doi          = {10.1002/rnc.716},
  issn         = {1049-8923, 1099-1239},
  number       = {8},
  pages        = {649--673},
  urldate      = {2023-01-25},
  volume       = {12},
  langid       = {english},
  shorttitle   = {Fault Estimation?},
  year         = {2002},
}

@InProceedings{Tacx2022,
  author    = {Tacx, Paul and Oomen, Tom},
  title     = {{A One-step Approach for Centralized Overactuated Motion Control of a Prototype Reticle Stage}},
  doi       = {10.1016/j.ifacol.2022.11.202},
  booktitle = {2nd Modeling, Estimation and Control Conference},
  pages     = {308--313},
  year      = {2022},
}

@InProceedings{fengtaoFaultDetectionObserver2005,
  author     = {Tao, F. and Zhao, Q},
  booktitle  = {Proceedings of 2005 {{IEEE Conference}} on {{Control Applications}}},
  title      = {Fault Detection Observer Design with Unknown Inputs},
  doi        = {10.1109/CCA.2005.1507307},
  eventtitle = {2005 {{IEEE Conference}} on {{Control Applications}}, 2005. {{CCA}} 2005.},
  isbn       = {978-0-7803-9354-7},
  location   = {Toronto, Canada},
  pages      = {1275--1280},
  urldate    = {2024-06-02},
  langid     = {english},
  year       = {2005},
}

@Article{Varga2017,
  author  = {Varga, Andreas},
  title   = {{Fault Detection and Isolation Tools (FDITOOLS) User's Guide}},
  arxivid = {1703.08480},
  journal = {},
  doi     = {10.48550/arXiv.1703.08480},
  year    = {2017},
}

@InProceedings{haibowangIterativeLMIApproach2003,
  author    = {Wang, H. and Wang, J. and Liu, J. and Lam, J.},
  booktitle = {42nd {{IEEE International Conference}} on {{Decision}} and {{Control}}},
  title     = {Iterative {{LMI}} Approach for Robust Fault Detection Observer Design},
  doi       = {10.1109/CDC.2003.1272905},
  isbn      = {978-0-7803-7924-4},
  location  = {Maui, Hawaii, USA},
  pages     = {1974--1979},
  urldate   = {2024-05-06},
  langid    = {english},
  year      = {2003},
}

@Article{wangLMIApproachIndex2007,
  author       = {Wang, J. L. and Yang, G. and Liu, J.},
  journal = {Automatica},
  title        = {An {{LMI}} Approach to  {{$H_-$}}  Index and Mixed {{$H_-$}}/{{$H_{\infty}$}} Fault Detection Observer Design},
  doi          = {10.1016/j.automatica.2007.02.019},
  issn         = {00051098},
  number       = {9},
  pages        = {1656--1665},
  urldate      = {2021-03-04},
  volume       = {43},
  year         = {2007},
}

@Article{zhongLMIApproachDesign2003,
  author       = {Zhong, M. and Ding, S. X. and Lam, J. and Wang, H.},
  journal = {Automatica},
  title        = {An {{LMI}} Approach to Design Robust Fault Detection Filter for Uncertain {{LTI}} Systems},
  doi = {10.1016/S0005-1098(02)00269-8},
  langid       = {english},
  year         = {2003},
}

@Book{Zhou1996a,
  author    = {Zhou, K. and Doyle, J. and Glover, K.},
  title     = {{Robust and optimal control}},
  doi       = {10.1016/s0005-1098(97)00132-5},
  publisher = {Prentice Hall},
  issn      = {01912216},
  year      = {1996},
}

@Book{zolghadriFaultDiagnosisFaultTolerant2014,
  author     = {Zolghadri, A. and Henry, D. and Cieslak, J. and Efimov, D. and Goupil, P.},
  title      = {Fault {{Diagnosis}} and {{Fault-Tolerant Control}} and {{Guidance}} for {{Aerospace Vehicles}}: {{From Theory}} to {{Application}}},
  doi        = {10.1007/978-1-4471-5313-9},
  location   = {London},
  publisher  = {Springer},
  shorttitle = {Fault {{Diagnosis}} and {{Fault-Tolerant Control}} and {{Guidance}} for {{Aerospace Vehicles}}},
  year       = {2014},
}

\appendix
\section{Alternative representation residual dynamics for uncertain closed-loop systems}\label{chpt2:Appendix:optimal_ratio}
Consider the residual dynamics in \eqref{chpt2:residual_expression_long}. This appendix shows how to rewrite the residual dynamics to a different insightful form by factoring out a common uncertainty. Factoring out $G_{d}$ and $G_{f}$ gives
\begin{align*}
    \epsilon &= R\Tilde{M}_u\{ \tilde{G}_u(\Delta) K S_\Delta r + (I - \tilde{G}_u(\Delta)KS_\Delta) G_d(\Delta) d + (I - \tilde{G}_u(\Delta)KS_\Delta) G_f(\Delta) f \} \nonumber.
\end{align*}
Taking out $(I - \tilde{G}_u(\Delta)KS_\Delta)$ gives
\begin{align}
    \epsilon &= R\Tilde{M}_u (I - \tilde{G}_u(\Delta)KS_\Delta) \{ (I - \tilde{G}_u(\Delta)KS_\Delta)^{-1} \tilde{G}_u(\Delta) K S_\Delta r \label{chpt2:eq:app_res_dyn_1} + G_d(\Delta) d + G_f(\Delta) f \}.
\end{align}
Consider the following related to the factor $I - \tilde{G}_u(\Delta)KS_\Delta$. Using that $S_\Delta+T_\Delta = I$ in \eqref{chpt2:eq:app_res_dyn_1}, with $T_\Delta = (I + G_u(\Delta)K)^{-1} G_u(\Delta)$ $K = G_u(\Delta)K (I + G_u(\Delta)K)^{-1}$, this factor can be simplified as follows.
\begin{align*}
I - \tilde{G}_u(\Delta)KS_\Delta &= I - (G_u(\Delta) - G_u(0))KS_\Delta\\
    &= I-T_\Delta+G_u(0)KS_\Delta\\
    &= S_\Delta+G_u(0)KS_\Delta\\
    &= (I+G_u(0)K)S_\Delta\\
    &= S^{-1}S_\Delta,
\end{align*}
where $S = (I+G(0)K)^{-1}$ is the nominal sensitivity function. Next, consider the following related to the factor $(I - \tilde{G}_u(\Delta)KS_\Delta)^{-1} \tilde{G}_u(\Delta) K S_\Delta$ in \eqref{chpt2:eq:app_res_dyn_1}. \allowdisplaybreaks
\begin{align*}
    (I - \tilde{G}_u(\Delta)KS_\Delta)^{-1}\tilde{G}_u(\Delta)KS_\Delta 
    &= \left(\tilde{G}_u(\Delta)KS_\Delta\left((\tilde{G}_u(\Delta)KS_\Delta)^{-1} -I\right)\right)^{-1} \tilde{G}_u(\Delta)KS_\Delta\\
    &= \left((\tilde{G}_u(\Delta)KS_\Delta)^{-1}-I\right)^{-1} (\tilde{G}_u(\Delta)KS_\Delta)^{-1} \tilde{G}_u(\Delta)KS_\Delta \\
    &= \left(S_\Delta^{-1}(\tilde{G}_u(\Delta) K)^{-1}-I\right)^{-1}\\
    &= \left((I+G_u(\Delta)K)(\tilde{G}_u(\Delta)K)^{-1}-I\right)^{-1}\\
    &= \left((I+G_u(\Delta)K-\tilde{G}_u(\Delta)K)(\tilde{G}_u(\Delta)K)^{-1}\right)^{-1}\\
    & \text{Using that $\tilde{G}_u(\Delta) = G_u(\Delta)-G_u(0)$ gives}\\
    &= \left((I+G_u(0)K)(\tilde{G}_u(\Delta)K)^{-1}\right)^{-1}\\
    &= \tilde{G}_u(\Delta)KS,
\end{align*}
Hence, the residual dynamics \eqref{chpt2:eq:app_res_dyn_1}, and thus \eqref{chpt2:residual_expression_long}, can be written as
\begin{equation*}
    \begin{split}
    \epsilon &= R\tilde{M}_u S^{-1}S_\Delta\Big( \tilde{G}_u(\Delta)CS r + G_d(\Delta) d + G_f(\Delta) f \Big),
    \end{split}
\end{equation*}
which reveals that a large part of the uncertainty can be factored out and applies to all exogenous inputs $r$, $d$, and $f$ equally.

\section{Proof of Lemma \ref{chpt2:definition:untight}}
\label{app:proof:definition:untight}

\begin{proof}
    If 
    \begin{equation*}
        \bar{\sigma}( \bar{G}_{do}^{-1}(j\omega) \tilde{G}_d(j\omega,\Delta) ) \leq 1, \quad \forall \omega, \forall \Delta \in \mathbf{\Delta},
    \end{equation*}
    then 
    \begin{equation*}
    \begin{split}
        \sup_{\norm{d_{1} (j \omega)}_{2} = 1} \norm{ \bar{G}_{do}^{-1}(j\omega) \tilde{G}_d(j\omega,\Delta) d_{1} (j \omega)}_{2} \leq 1, \quad \forall \omega, \forall \Delta \in \mathbf{\Delta}.
    \end{split}
    \end{equation*}
    Since any $d_{2}$ satisfies $\|\Tilde{d}_2\|_2 = 1$,
    \begin{equation*}
    \begin{split}
        \sup_{\norm{d_{1} (j \omega)}_{2} = 1} \norm{ \bar{G}_{do}^{-1}(j\omega) \tilde{G}_d(j\omega,\Delta) d_{1} (j \omega)}_{2} \leq \|\Tilde{d}_2 (j\omega)\|_2, \quad \forall \|\Tilde{d}_2 (j\omega)\|_2 = 1, \forall \omega, \forall \Delta \in \mathbf{\Delta}.
    \end{split}
    \end{equation*}
    Using the inner of $\bar{G}_{d}(j\omega)$, 
    \begin{equation*}
    \begin{split}
        \sup_{\norm{d_{1} (j \omega)}_{2} = 1} \norm{ \bar{G}_{do}^{-1}(j\omega) \tilde{G}_d(j\omega,\Delta) d_{1} (j \omega)}_{2} \leq \|\bar{G}_{di}(j\omega) \Tilde{d}_2 (j\omega)\|_2, \quad \forall \|\Tilde{d}_2 (j\omega)\|_2 = 1, \forall \omega, \forall \Delta \in \mathbf{\Delta}.
    \end{split}
    \end{equation*}
    Hence, for all $\norm{d_{1} (j \omega)}_{2} = 1$, 
    \begin{equation*}
    \begin{split}
        \norm{ \bar{G}_{do}^{-1}(j\omega) \tilde{G}_d(j\omega,\Delta) d_{1} (j \omega)}_{2} \leq \|\bar{G}_{di}(j\omega) \Tilde{d}_2 (j\omega)\|_2, \quad \forall \|\Tilde{d}_2 (j\omega)\|_2 = 1,  \forall \omega, \forall \Delta \in \mathbf{\Delta}.
    \end{split}
    \end{equation*}
    Since for each $\Delta$, $d_{1} (j \omega)$ is such that $\tilde{G}_d(j\omega,\Delta) d_{1} (j \omega)$ points in the same direction as $\bar{G}_d(j\omega) d_{2} (j \omega)$, a scaling factor $g$ is introduced such that
    \begin{equation*}
    \begin{split}
        \tilde{G}_d(j\omega,\Delta) d_{1} (j \omega) = g \bar{G}_d(j\omega) d_{2} (j \omega), \quad \forall \|\Tilde{d}_2 (j\omega)\|_2 = 1, \forall \omega, \forall \Delta \in \mathbf{\Delta}
    \end{split}
    \end{equation*}
    with $g \leq 1$. Since these point in the same direction, clearly
    \begin{equation*}
    \begin{split}
        \bar{G}_{do}^{-1}(j\omega) \tilde{G}_d(j\omega,\Delta) d_{1} (j \omega) = g \bar{G}_{do}^{-1}(j\omega) \bar{G}_d(j\omega) d_{2} (j \omega), \quad \forall \|\Tilde{d}_2 (j\omega)\|_2 = 1, \forall \omega, \forall \Delta \in \mathbf{\Delta}
    \end{split}
    \end{equation*}
    point in the same direction. Thus,
    \begin{equation*}
    \begin{split}
        \norm{ \tilde{G}_d(j\omega,\Delta) d_{1} (j \omega)}_{2} \leq \|\bar{G} (j\omega) \Tilde{d}_2 (j\omega)\|_2, \quad \forall \|\Tilde{d}_2 (j\omega)\|_2 = 1,  \forall \omega, \forall \Delta \in \mathbf{\Delta}.
    \end{split}
    \end{equation*}
\end{proof}

\section{Proof of Lemma \ref{chpt2:definition:tight}}
\label{app:proof:definition:tight}

\begin{proof}
    Start from the observation that if all the worst-case singular values $i = 1, \ldots, n_{\sigma}$ are equal to one for each $i = 1, \ldots, n_{\sigma}$,  then,
    \begin{equation*}
        \sup_{\Delta \in \mathbf{\Delta}} \bar{\sigma}_{i} ( \bar{G}_{do}^{-1}(j\omega) \tilde{G}_d(j\omega,\Delta) ) = 1, \quad \forall \omega, 
    \end{equation*}
    is equivalent to
    \begin{equation*}
    \begin{split}
        \sup_{\Delta \in \mathbf{\Delta}} \norm{ \bar{G}_{do}^{-1}(j\omega) \tilde{G}_d(j\omega,\Delta) d_{1} (j \omega)}_{2} = \|\bar{G}_{di}(j\omega) \Tilde{d}_2 (j\omega)\|_2, \quad \forall \|\Tilde{d}_2 (j\omega)\|_2 = 1,  \forall \omega. 
    \end{split}
    \end{equation*}
    where $\tilde{d}_1$ is a unitary vector such that the output of the particular realization of
$\tilde{G}_d(j\omega,\Delta)\tilde{d}_1$ matches the direction of $\bar{G}_d(j\omega)\tilde{d}_2$. Hence, $\bar{G}_{do}^{-1}(j\omega) \tilde{G}_d(j\omega,\Delta) d_{1} (j \omega)$ points in the same direction as $\bar{G}_{do}^{-1}(j\omega) \bar{G}_d(j\omega) d_{2} (j \omega) = \bar{G}_{di}(j\omega) d_{2} (j \omega)$. Since these point in the same direction, clearly
    \begin{equation*}
    \begin{split}
       \sup_{\Delta \in \mathbf{\Delta}} \left( \bar{G}_{do}^{-1}(j\omega) \tilde{G}_d(j\omega,\Delta) d_{1} (j \omega) \right) = \bar{G}_{di}(j\omega) d_{2} (j \omega), \quad \forall \|\Tilde{d}_2 (j\omega)\|_2 = 1, \forall \omega. 
    \end{split}
    \end{equation*}
Then, premultiplication with $\bar{G}_{do}(j\omega)$ gives
    \begin{equation*}
    \begin{split}
       \sup_{\Delta \in \mathbf{\Delta}} \left( \tilde{G}_d(j\omega,\Delta) d_{1} (j \omega) \right) = \bar{G}(j\omega) d_{2} (j \omega), \quad \forall \|\Tilde{d}_2 (j\omega)\|_2 = 1, \forall \omega, 
    \end{split}
    \end{equation*}
    and therefore,
    \begin{equation*}
    \begin{split}
       \sup_{\Delta \in \mathbf{\Delta}} \norm{ \tilde{G}_d(j\omega,\Delta) d_{1} (j \omega) }_{2} = \norm{ \bar{G}(j\omega) d_{2} (j \omega) }_{2}, \quad \forall \|\Tilde{d}_2 (j\omega)\|_2 = 1, \forall \omega. 
    \end{split}
    \end{equation*}    
    Lemma \ref{chpt2:definition:tight} is proven by executing the steps in opposite direction, which concludes the proof.  
\end{proof}

\section{Proof of Theorem \ref{chpt2:theorem:main_solution}}
\label{app:proof:theorem:main_solution}

\begin{proof}
Since \eqref{chpt2:upperbound} holds where $\tilde{d}_1$ is a unitary vector such that the output of the particular realization of
$\tilde{G}_d(j\omega,\Delta)\tilde{d}_1$ matches the direction of $\bar{G}_d(j\omega)\tilde{d}_2$,
\begin{equation}
   \begin{split}
    \|R(j \omega) \tilde{G}_d(j\omega,\Delta)\tilde{d}_1\|_2 \leq \| R(j \omega) \bar{G}_d(j\omega)\tilde{d}_2\|_2 \quad \forall \omega, \forall \Delta \in \mathbf{\Delta},
   \end{split}
\end{equation}
Then, if $R(j\omega)$, $\tilde{G}_d(j\omega,\Delta)$ and $\bar{G}_d(j\omega)$ are stable proper transfer function matrices,
   \begin{equation*}
    \|R \tilde{G}_d(\Delta)\|_\infty \leq \|R \bar{G}_d\|_\infty \quad \forall \Delta \in \mathbf{\Delta}. \label{chpt2:upperbound_imply}
    \end{equation*}
It follows from substitution into \eqref{chpt2:performance_index} that
\begin{equation}\label{chpt2:performance_index_conservative}
    J_{i,\omega,\bar{\Delta}}(R) =
    \frac{\sigma_i\left( R(j\omega) T_{\tilde{\epsilon} f}(j\omega,\bar{\Delta}) \right)}{\|R \tilde{G}_d(\Delta)\|_\infty}
    \geq 
    \frac{\sigma_i\left( R(j\omega) T_{\tilde{\epsilon} f}(j\omega,\bar{\Delta}) \right)}{\|R \bar{G}_d \|_\infty}.
\end{equation}
Assuming a tight upper bound as in Assumption \ref{chpt2:assumption_4} in the theorem gives that
\begin{equation*}\label{chpt2:Gdbar_wc_eq}
    \|R \tilde{G}_d(\Delta)\|_\infty = \|R \bar{G}_d \|_\infty,
\end{equation*}
which results in equality in \eqref{chpt2:performance_index_conservative} as
\begin{equation}\label{chpt2:performance_index_optimal}
    J_{i,\omega,\bar{\Delta}}(R) = \frac{\sigma_i\left( R(j\omega) T_{\tilde{\epsilon} f}(j\omega,\bar{\Delta}) \right)}{\|R \tilde{G}_d(\Delta)\|_\infty}
    =
    \frac{\sigma_i\left( R(j\omega) T_{\tilde{\epsilon} f}(j\omega,\bar{\Delta}) \right)}{\|R \bar{G}_d \|_\infty}.
\end{equation}
From Assumptions \ref{chpt2:assumption_1} to \ref{chpt2:assumption_3} follows that a CIOF of $\bar{G}_d$ exists. I.e., $\bar{G}_d = G_{do} G_{di}$ exists, where $G_{di}$ is the co-inner matrix satisfying $G_{di}(j\omega)G_{di}^T(-j\omega)=I$ and having $\sigma(G_{di}(j\omega))=I$ for all $\omega$, and $G_{do}$ is co-outer satisfying $G_{do}^{-1} \in \mathcal{R}\mathcal{H}_\infty$ and thus $G_{do}^{-1}G_{do}=I$. Now parameterize $R$ as 
$$
R = \gamma Q G_{do}^{-1}
$$
where $Q \in$ \RHinf an arbitrary stable TFM. Substitution into \eqref{chpt2:performance_index_optimal} yields
\begin{align}
J_{i,\omega,\bar{\Delta}}(R) &= \nonumber
    \frac{\sigma_i\left( \gamma Q(j\omega) G_{do}^{-1}(j\omega) T_{\tilde{\epsilon} f}(j\omega,\bar{\Delta}) \right)}{\|\gamma  Q G_{di} \|_\infty}\\
    &= \frac{\sigma_i\left( Q(j\omega) G_{do}^{-1}(j\omega) T_{\tilde{\epsilon} f}(j\omega,\bar{\Delta}) \right)}{\|Q \|_\infty} \label{chpt2:R_subs}\\
    &\leq
    \sigma_i\left( G_{do}^{-1}(j\omega) T_{\tilde{\epsilon} f}(j\omega,\bar{\Delta}) \right), \nonumber
\end{align}
where it is used that $G_{di}(j\omega)G_{di}^\top(-j\omega)=I$ so that $\|Q G_{di} \|_\infty = \|Q \|_\infty$.

From the inequality in \eqref{chpt2:R_subs} follows that setting $R_{\mathrm{opt}} = \gamma G_{do}^{-1}$, i.e., $Q=I$ gives the optimal performance. Hence, it holds that for all $\omega, \Delta \in \mathbf{\Delta}$, and $i=1,\ldots, n_\sigma$
\begin{equation}\label{chpt2:perf_index_int}
        J_{i,\omega,\bar{\Delta}}(R_{\mathrm{opt}})
        =
        \sigma_i\left( G_{do}^{-1}(j\omega) T_{\tilde{\epsilon} f}(j\omega,\bar{\Delta}) \right),
\end{equation}
resulting in the optimal solution of Problem \ref{chpt2:prob:problem_Ding} for uncertain closed-loop systems. Evidently, this $R_{\mathrm{opt}}$ also maximizes the performance index with worst-case fault sensitivity \eqref{chpt2:performance_index}. The state-space realization of the outer term $G_{do}^{-1}$ follows directly from \cite[Theorem 13.35]{Zhou1996a}.
\end{proof}
    
\end{document}